\newcommand{\R}{\mathbb{R}}
\newcommand{\Rplus}{\mathbb{R}_{>0}}
\newcommand{\Rnn}{\mathbb{R}_{\geq 0}}
\newcommand{\invtPoly}{\mathcal{P}}
\newcommand{\St}{\mathscr{S}}
\DeclareMathOperator{\im}{im}
\newtheorem{theorem}{Theorem}[section]
\newtheorem{lemma}[theorem]{Lemma}
\newtheorem{proposition}[theorem]{Proposition}
\theoremstyle{definition}
\newtheorem{remark}[theorem]{Remark}
\newtheorem{example}[theorem]{Example}
\newtheorem{notation}[theorem]{Notation}
\newtheorem{condition}[theorem]{Condition}
\newtheorem{definition}[theorem]{Definition}
\DeclareSymbolFont{bbold}{U}{bbold}{m}{n}
\DeclareSymbolFontAlphabet{\mathbbold}{bbold}
\newcolumntype{L}{>{$}l<{$}} 
\newcommand{\rev}[2]{$\hspace{-2.6em}\xymatrix{
    &#1 \ar @<.4ex> @{-^>} [r]
    &\ar @{--^>} [l] #2
   }$}
\journal{~}
\begin{document}

\begin{frontmatter}

\title{An All-Encompassing Global Convergence Result for Processive Multisite Phosphorylation Systems}
\author{Mitchell Eithun}
\address{Department of Mathematical Sciences, Ripon College, Ripon, WI 54971}
\ead{eithunm@ripon.edu}

\author{Anne Shiu}
\address{Department of Mathematics, Texas A\&M University, College Station, TX 77843-3368}
\ead{annejls@math.tamu.edu}

\date{June 2, 2017}

\begin{abstract}
Phosphorylation, the enzyme-mediated addition of a phosphate group to a molecule, is a ubiquitous chemical mechanism in biology. 
Multisite phosphorylation, the addition of phosphate groups to multiple sites of a single molecule, may be distributive or processive. Distributive systems, which require an enzyme and substrate to bind several times in order to add multiple phosphate groups,
can be bistable. Processive systems, in contrast, require only one binding to add all phosphate groups, and were recently shown to be globally stable. 
However, this global convergence result was proven only for a specific mechanism of processive phosphorylation/dephosphorylation (namely, all catalytic reactions are reversible). 
Accordingly, we generalize this result to allow for processive phosphorylation networks in which each reaction may be irreversible, and also to account for possible product inhibition. We accomplish this by first defining an all-encompassing processive network that encapsulates all of these schemes, and then appealing to recent results of Marcondes de Freitas, Wiuf, and Feliu that assert global convergence by way of monotone systems theory and network/graph reductions (corresponding to removing intermediate complexes).
Our results form a case study into the question of when global convergence is preserved when reactions and/or intermediate complexes are added to or removed from a network.  
\end{abstract}

\begin{keyword}
chemical reaction network \sep 
monotone systems theory \sep  global stability \sep SR-graph \sep intermediate complex 


\end{keyword}

\end{frontmatter}


\section{Introduction}

We address the question of when global dynamics, such as global convergence to a unique equilibrium, are preserved when reactions and/or intermediate complexes are added to or removed from a biochemical network.  Our work forms a case study into this question, by analyzing networks that describe the processive multisite phosphorylation/dephosphorylation of a molecule (a so-called ``multiple futile cycle'').  We now recall possible mechanisms underlying such a network.

\subsection{Mechanisms of Processive Multisite Phosphorylation} \label{sec:pro-lit}
A biological process of great importance, {\em phosphorylation} is the enzyme-mediated addition of a phosphate group to a protein substrate.  This process often modifies the function of the substrate.  The reactions underlying this mechanism are: $S_0 + E \leftrightharpoons S_0E \to S_1+E$, where $S_i$ is the substrate with $i$ phosphate groups attached and $E$ is the enzyme.

Additionally, many substrates have more than one {\em site} at which phosphate groups can be attached.  Such multisite phosphorylation may be {\em distributive} or {\em processive}, or somewhere in between~\cite{Guna_threshold, PM}.
In distributive phosphorylation, 
each binding of an enzyme to a substrate results in at most one addition 
of a phosphate group.
In contrast, in processive phosphorylation, 
when an enzyme catalyzes the addition 
of a phosphate group, phosphate groups are added 
to all sites before the enzyme and substrate dissociate.

Most studies on the mathematics of multisite phosphorylation have focused on phosphorylation under a sequential and fully {\em distributive} mechanism~\cite{MPM_MAPK,a6maya,KathaMulti,ManraiGuna,WangSontag}.  These systems admit bistability~\cite{bistable,Markevich} and oscillations~\cite{Errami}, and the set of steady states is parametrized by monomials~\cite{translated,TG,TSS}.

As for {\em processive} phosphorylation, Conradi and Shiu~\cite{Shiu} considered the following processive $n$-site phosphorylation/dephosphorylation network:
\begin{equation}
  \label{eq:multisite}
  \begin{split}
  \xymatrix{
    S_0 + K  \ar @<.4ex> @{-^>} [r] ^-{k_1}
    &\ar @{-^>} [l] ^-{k_{2}} S_0 K \ar @<.4ex> @{-^>} [r] ^-{k_3}
    &\ar @{-^>} [l] ^-{k_{4}} S_1 K \ar @<.4ex> @{-^>} [r] ^-{k_{5}}
    &\ar @{-^>} [l] ^-{k_{6}} \hdots \ar @<.4ex> @{-^>} [r] ^-{k_{2n-1}}
    &\ar @{-^>} [l] ^-{k_{2n}} S_{n-1} K \ar [r] 
    ^-{k_{2n+1}}
    & S_n + K\\
    S_n + F \ar @<.4ex> @{-^>} [r] ^-{\ell_{2n+1}}
    &\ar @{-^>} [l] ^-{\ell_{2n}} S_{n} F \ar @<.4ex> @{-^>} [r]
    ^-{\ell_{2n-1}} 
    &\ar @{-^>} [l] ^-{\ell_{2n-2}} \hdots \ar @<.4ex> @{-^>} [r]
    ^-{\ell_5}
    &\ar @{-^>} [l] ^-{\ell_4} S_2 F \ar @<.4ex> @{-^>} [r]
    ^-{\ell_3}
    &\ar @{-^>} [l] ^-{\ell_2} S_1 F \ar [r]
    ^-{\ell_1}
    &S_0 + F
  }
  \end{split}
\end{equation}
They proved that every resulting dynamical system (arising from mass-action kinetics), in contrast with distributive systems, does {\em not} admit bistability or oscillations, and, moreover, exhibits rigid dynamics.  Specifically, each invariant set (specified by conservation laws) contains a unique steady state, which is a global attractor~\cite{Shiu}. Conradi and Shiu proved this result via monotone systems theory, by generalizing a result of Angeli and Sontag \cite{AS}. Subsequently, using other means, Ali Al-Radhawi~\cite[\S 8.3]{ali2015new}, Rao \cite{Rao}, and Marcondes de Freitas, Wiuf, and Feliu \cite{Freitas2} established the same global convergence result.

However, in addition to~\eqref{eq:multisite}, there are other mechanisms for processive phosphorylation,  the following being the most common \cite{SK}: 
\begin{equation} 
\label{eq:irreversible}
\begin{split}
  \xymatrix{
    S_0 + K  \ar @<.4ex> @{-^>} [r] ^-{k_1}
    &\ar @{-^>} [l] ^-{k_{2}} S_0 K \ar [r] ^-{k_3}
    &S_1 K \ar [r] ^-{k_{5}}
    &\hdots \ar  [r] ^-{k_{2n-1}}
    & S_{n-1} K \ar [r] 
    ^-{k_{2n+1}}
    & S_n + K\\
    S_n + F \ar @<.4ex> @{-^>} [r] ^-{\ell_{2n+1}}
    &\ar @{-^>} [l] ^-{\ell_{2n}} S_{n} F \ar  [r] ^-{\ell_{2n-1}} 
    ^-{\ell_{2n-1}} 
    &\hdots \ar [r]
    ^-{\ell_5}
    &S_2 F \ar  [r]
    ^-{\ell_3}
    &S_1 F \ar [r]
    ^-{\ell_1}
    &S_0 + F
  }
\end{split}
\end{equation}
Here, in contrast with network~\eqref{eq:multisite}, the catalytic reactions are not reversible.

Another possible mechanism incorporates \textit{product inhibition}. 
Instead of detaching when the final phosphate group is attached or removed (e.g., $S_{n-1}K \to S_n +K$), the substrate and enzyme remain bound (e.g., $S_{n-1}K \to S_n K$), and then subsequently detach (e.g., $ S_n K \rightarrow S_n +K$).  Also, the final product (e.g., $S_n$) may rebind to the enzyme, thereby inhibiting its activity (e.g., $ S_n K \leftarrow S_n +K$).  Thus, a processive realization of this scheme is:
\begin{equation} 
\label{eq:inhibited}
\begin{split}
  \xymatrix{
    S_0 + K  \ar @<.4ex> @{-^>} [r]
    &\ar @{-^>} [l] S_0 K \ar  [r]
    &S_1 K \ar [r] 
    &\hdots \ar  [r] 
    &S_{n-1} K \ar[r] 
    &S_nK \ar @<.4ex> @{-^>} [r]
    &\ar @{-^>} [l] S_n + K 
    \\
    S_n + F \ar @<.4ex> @{-^>} [r] 
    &\ar @{-^>} [l]  S_{n} F \ar  [r] 
    &\hdots \ar  [r] 
    &S_2 F \ar [r] 
    &S_1 F \ar  [r]
    &S_0 F \ar @<.4ex> @{-^>} [r] 
    &\ar @{-^>} [l]  S_0 + F
  }
\end{split}
\end{equation}
There are distributive systems 
with such product inhibition \cite[Scheme 2]{Markevich}. 

Can the global stability result for \eqref{eq:multisite} be generalized to incorporate the other mechanisms (\ref{eq:irreversible}--\ref{eq:inhibited})? Indeed, we accomplish this in this work:

\begin{theorem} \label{thm:intro}
For any mass-action kinetics\footnote{In fact, other kinetics besides mass-action also work (see Remark~\ref{remark:rs}).} system arising from network \eqref{eq:multisite}, \eqref{eq:irreversible}, or~\eqref{eq:inhibited} and any choice of rate constants, each invariant set $\invtPoly$ contains a unique positive steady state and it is the global attractor of $\invtPoly$.
\end{theorem}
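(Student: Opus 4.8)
The plan is to exhibit all three networks as sub-networks of a single \emph{all-encompassing} processive network $\widehat{G}$: the one in which every catalytic step $S_iK\rightleftharpoons S_{i+1}K$ (and $S_{i}F\rightleftharpoons S_{i-1}F$) is reversible, the fully loaded complexes $S_nK$ and $S_0F$ are present with reversible binding $S_n+K\rightleftharpoons S_nK$ and $S_0+F\rightleftharpoons S_0F$, and both the direct final dissociations ($S_{n-1}K\to S_n+K$, $S_1F\to S_0+F$) and the indirect ones through $S_nK$, $S_0F$ are included. Each of \eqref{eq:multisite}, \eqref{eq:irreversible}, and \eqref{eq:inhibited} is then obtained from $\widehat{G}$ by deleting some reactions. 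All four networks are conservative --- the total substrate, the total $K$, and the total $F$ add up to a conservation law in which every species appears with positive coefficient --- so every trajectory is bounded.

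Next I would identify, in each of the four networks, the bound complexes $S_iK$ and $S_iF$ as \emph{intermediate species} in the sense of Feliu--Wiuf: each occurs only as a singleton complex, is produced and consumed, and is eventually ``flushed'' to one of the genuine complexes $S_0+K$, $S_n+K$, $S_0+F$, $S_n+F$. Removing all of them collapses the two processive chains, in every case, to the same two-reaction network
\begin{equation*}
  N_0:\qquad S_0+K \;\longrightarrow\; S_n+K, \qquad S_n+F \;\longrightarrow\; S_0+F .
\end{equation*}
(The reductions all land on $N_0$ because the irreversibility of the last catalytic step in each chain blocks any composite reaction from running backward through an intermediate.) The network $N_0$ is immediate to analyze: fixing the conserved quantities $[K]$ and $[F]$ turns its dynamics into a scalar, monotone ODE in $[S_0]$, which converges to a unique positive equilibrium; its SR-graph has a single cycle (through $S_0$ and $S_n$), which is consistent; and it is conservative. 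Hence each invariant set of $N_0$ contains a unique positive steady state that attracts it.

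The final step is to lift this from $N_0$ back to $\widehat{G}$, and to \eqref{eq:multisite}--\eqref{eq:inhibited}, using the global-convergence-under-intermediates results of Marcondes de Freitas, Wiuf, and Feliu: adding intermediates destroys neither monotonicity nor the relevant SR-graph condition; the positive steady states of the full network are in bijection with those of $N_0$, so each invariant set contains exactly one; and, given persistence, the monotone-convergence theorem upgrades ``convergence to the equilibrium set'' to ``convergence to that unique positive steady state.'' Since \eqref{eq:multisite}, \eqref{eq:irreversible}, and \eqref{eq:inhibited} all reduce to the same $N_0$, the argument applies to each in the same way.

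The main obstacle is twofold. First, \emph{persistence}: monotone systems theory by itself only drives $\omega$-limit sets into the equilibrium set, so one must rule out steady states on the boundary faces of the invariant sets; for phosphorylation-type networks this is a genuine (if by-now-standard) argument, which I would obtain from a siphon/Petri-net analysis or cite from the literature underlying the theorem. Second, the \emph{descent to sub-networks}: the hypotheses of the lifting theorem --- monotonicity via a consistent SR-graph, conservativity, the intermediate-flushing property, and persistence --- must be verified not only for $\widehat{G}$ but for each of \eqref{eq:multisite}, \eqref{eq:irreversible}, and \eqref{eq:inhibited} after the deletion of reactions, and it is precisely the fact that deleting these particular reactions (equivalently, adding these particular intermediates) preserves all of these properties that constitutes the ``case study'' content of the paper.
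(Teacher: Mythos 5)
Your overall strategy---subsume the three mechanisms in one all-encompassing network, strip out the intermediate complexes \emph{\`a la} Marcondes de Freitas--Wiuf--Feliu, verify the monotone-systems graph criteria on the small reduced network, and handle persistence by a siphon/P-semiflow argument---is exactly the strategy of the paper. But there is a genuine gap in how you pass from the all-encompassing network to the three target networks. You define $\widehat{G}$ as the \emph{fully reversible} network and propose to obtain \eqref{eq:multisite}, \eqref{eq:irreversible}, and \eqref{eq:inhibited} by \emph{deleting reactions}, then ``descend'' the convergence result to these sub-networks. No such descent principle exists, and the paper's own discussion gives the counterexample: $A \leftrightarrows B$ is globally stable while $A \to B$ is not. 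You flag this as an ``obstacle'' to be dealt with by re-verifying all hypotheses for each sub-network after deletion---but that re-verification is the entire content of the proof, so your first step (analyzing the fully reversible $\widehat{G}$) buys nothing. The paper's fix is structural: the all-encompassing network~\eqref{eq:m_network_processive} is defined from the outset as a \emph{family}, with each reaction marked as ``possibly reversible'' (rate constants $k_{-ij}\ge 0$ permitted to vanish), so that every variant is literally an instance of the single Theorem~\ref{thm:main-result}, and the reduction lemma and the SR-/R-graph computations are carried out once, uniformly, with the reversible-or-not edges tracked as optional (the dashed edges in Figure~\ref{fig:graphs}). The positive loop property and the condition $\ker(\Gamma^*)\cap\operatorname{int}(\mathcal{K}^*)\ne\emptyset$ hold regardless of which optional edges are present, which is what makes the one-shot argument legitimate.

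Two smaller points. First, your reduced network $N_0$ is not what the intermediate-removal procedure produces: removing all the $S_iK$ cancels $K$ as a common species of $S_0+K$ and $S_n+K$ (compare Example~\ref{ex:1-site-reduced}), so one lands on $S_0\to S_n$, $S_n\to S_0$; and when a composite reaction comes out reversible this two-reaction network degenerates into two copies of the same reversible reaction, which is precisely why the paper stops one removal short and keeps the intermediate $C_{mn_m}$ in the reduced network~\eqref{reduced}. Second, the logic is not ``prove global convergence for $N_0$ directly and lift it'': the removal theorems transfer the \emph{graph-theoretic hypotheses} (positive loop property, connectivity, the kernel/cone condition) between the reduced and original networks, and the convergence conclusion is then drawn for the original network from Proposition~\ref{prop:summary-stability-results}; your scalar-ODE analysis of $N_0$ is not an input to that machinery.
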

The proof of Theorem~\ref{thm:intro} appears in Section~\ref{sec:main-result}.  For now, we describe briefly the ideas behind the proof. 



\subsection{Proving Global Stability via an All-Encompassing Network}
To prove Theorem~\ref{thm:intro}, we construct an \textit{all-encompassing network} that subsumes all three networks~\eqref{eq:multisite}--\eqref{eq:inhibited}, and then prove the global convergence result for this network.  In this all-encompassing network, each reaction may be reversible or irreversible, there are $m$ reaction components rather than 2, and the number of binding sites in each component is allowed to differ.   

In addition to incorporating networks~\eqref{eq:multisite}--\eqref{eq:inhibited} as special cases, our all-encompassing network also specializes to 1-site phosphorylation networks (futile cycles) and certain cyclic networks introduced by Rao \cite{Rao}.  Hence, our global convergence result for the all-encompassing network generalizes prior global convergence results, including those of Angeli and Sontag \cite{AS} and Donnell and Banaji~\cite{DB} (for the 1-site network), Conradi and Shiu \cite{Shiu} and Marcondes de Freitas, Wiuf, and Feliu \cite{Freitas2} (network~\eqref{eq:multisite}), and Rao \cite{Rao}.

To prove our global convergence result, we use monotone systems theory and network/graph reductions.  
Specifically, we use a graph-theoretic criterion for global convergence from monotone systems theory.  This criterion, due to Angeli, De Leenheer, and Sontag~\cite{Angeli2010}, asserts that a given network is globally convergent if two graphs built from the network, the so-called $R$- and $SR$-graphs, satisfy certain properties. To apply this result efficiently, in light of the fact that our network has many intermediate complexes such as $S_0K$ and $S_n F$, we additionally use recent results that allow us to remove many of these intermediate complexes before applying the global-convergence criterion.  These results, due to Marcondes de Freitas, Wiuf, and Feliu \cite{Angeli2010, Freitas2}, state that if the convergence criterion holds after removing intermediate complexes, then the criterion also holds for the original network.

\subsection{Outline}  
The outline of our work is as follows.  Section~\ref{sec:background} defines reaction networks and their associated dynamical systems.  
Section~\ref{sec:all-encompassing} introduces the all-encompassing network, Section~\ref{sec:main-result} states the main global convergence result, and Section~\ref{proof2} provides the proof. In Section~\ref{sec:relation-global}, we mention other approaches to proving global stability, and, in Section~\ref{sec:relation}, we comment on how the systems analyzed in this work compare to other related phosphorylation systems. A discussion appears in Section~\ref{sec:discussion}. 
Finally, Appendix A explains how we check a technical detail, namely, bounded-persistence.

\subsection{Notation}
To aid the reader, we list in Table~\ref{tab:notation} the notation that we use, which will be defined beginning in the next section.
\begin{table}[]
\centering
\caption{Notation used in this work.}
\label{tab:notation}
\begin{tabular}{ll}
Notation & Definition \\
\hline
$\mathcal{S}$ & species set \\
$\mathcal{C}$ & complexes set \\
$\mathcal{R}$ & reactions set \\
$s$ & number of species \\
$r$ & number of reactions \\
$(\mathcal{S},\mathcal{C},\mathcal{R})$ & reaction network \\
$ \mathscr{S}$ & stoichiometric subspace \\
$\invtPoly$ & stoichiometric compatibility class \\
$G_{SR} = (V_{SR},E_{SR},L_{SR})$ & directed SR-graph \\
$G_{R} = (V_{R},E_{R},L_{R})$ & R-graph \\
$\mathcal{K}$ & an orthant cone
\end{tabular}
\end{table}

\section{Background} \label{sec:background}
This section describes how mass-action kinetics define a dynamical system from a chemical reaction network. Our setup is based on \cite{Shiu} and \cite{Freitas2}. 

\subsection{Chemical Reaction Networks}
As an example, consider the chemical reaction
\begin{equation} \label{example}
    \xymatrix{
      A+B  \ar [r] ^{\kappa} 
      & 3A+C 
    } 
\end{equation}

\noindent 
A {\em chemical reaction network} is a directed graph that comprises various reactions, such as~\eqref{example}.
The vertices $A+B$ and $3A+C$ are \textit{complexes}, which are linear combinations of individual \textit{species}.  The complex on the left side of a reaction is the \textit{reactant}, and the complex on the right side is the \textit{product}.   A species in a reactant (respectively, product) complex is a {\em reactant species} (respectively, {\em product species}).

An \textit{irreversible} reaction  is denoted by a unidirectional arrow ($\rightarrow$). 
A reaction with a double arrow, such as $X  \leftrightharpoons Y$ denotes a \textit{forward reaction} $X \to Y$  and a \textit{backward reaction} $Y \to X$. Together these reactions are called a $\textit{reversible}$ reaction. The parameter $\kappa$ is known as a \textit{rate constant}. 

More formally, a {\em chemical reaction network} with $s$ species is a triple $G = (\mathcal{S},\mathcal{C},\mathcal{R})$, which consists of:
\begin{enumerate}
\item a finite nonempty set of species $\mathcal{S} = \{ S_1, \dots, S_s \}$,
\item a set of complexes $\mathcal{C}$ of the form $y=(\alpha_1,\dots,\alpha_s)\in \mathbb{Z}^s_{\geq 0}$, representing the coefficients that form a linear combination of the species, and 
\item a set of reversible ($y \leftrightharpoons y'$) and irreversible ($y \rightarrow y'$) reactions $\mathcal{R}$. 
\end{enumerate}
For a reaction $y \leftrightharpoons y'$ or $y \to y'$, we call $y' - y$ the \textit{reaction vector}, which describes the net change in species. For instance, the reaction vector of the example reaction shown earlier in~\eqref{example} is $y_2-y_1 =(2,-1,1)$, which
means 
that with each occurrence of the reaction, two units of $A$ and one of $C$ are produced, while one unit of $B$ is consumed.

\begin{example} \label{ex:n=1}
\textit{Phosphorylation} is a chemical mechanism that adds a phosphate group to a molecule. The following network (called the ``futile cycle'') describes 1-site
  phosphorylation/dephosphorylation; it is the $n=1$ case of both networks~\eqref{eq:multisite} 
  and~\eqref{eq:irreversible}: 
  \begin{equation}
    \label{eq:network_n=1}
    \xymatrix{
      S_0 + K  \ar @<.4ex> @{-^>} [r] ^{\;\;k_1} 
      &\ar @{-^>} [l] ^{\;\;k_2} S_0 K  \ar [r] 
      ^{k_{3}} 
      & S_1 + K\\ 
      S_1 + F \ar @<.4ex> @{-^>} [r] ^{\;\;\ell_3}
      &\ar @{-^>} [l] ^{\;\;\ell_{2}} S_{1} F \ar [r] 
      ^{\ell_{1}}
      &S_0 + F
    } 
  \end{equation}
  The key players in this network
  are a kinase ($K$), 
  a phosphatase ($F$), and a substrate ($S_0$).  
  The substrate $S_1$ is obtained from the unphosphorylated protein $S_0$ by attaching a phosphate group to it via an enzymatic reaction catalyzed by $K$. Conversely, a reaction catalyzed by $F$ removes the phosphate group from $S_1$ to obtain $S_0$. The intermediate complexes~$S_0 K$ and~$S_1 F$ are the  bound enzyme-substrate complexes.
\end{example}

\subsection{Mass-Action Kinetics}
Recall the example reaction $A+B \longrightarrow 3A+C$ from~\eqref{example}. Let $x_A, x_B$, and $x_C$ be the concentrations of the species as functions of time. Assuming the reaction follows \textit{mass-action kinetics}, the species $A$ and $B$ react proportionally to the product of their concentrations with constant of proportionality $\kappa$. Noting that the reaction yields a net change of two units in
the amount of $A$, we obtain the first differential equation in the following
system:
\begin{align*}
\frac{d}{dt}x_{A}~&=~2\kappa x_{A}x_{B}~ \\
\frac{d}{dt}x_{B} ~&=~-\kappa x_{A}x_{B}~ \\
\frac{d}{dt} x_{C}~&=~\kappa x_{A}x_{B}~.
\end{align*}
The other equations follow similarly. The mass-action differential equations defined by a network are a sum of monomial contributions, each of which corresponds to the reactant of a chemical reaction in the network.  These differential equations will be defined by equations~(\ref{eq:ODE}--\ref{eq:R-for-mass-action}).

Letting $r$ denote the number of reactions, where we count
each pair of reversible reactions only once, 
the {\em stoichiometric matrix}  
$\Gamma$ is the $s \times r$ matrix whose $k$-th column 
is the reaction vector of the $k$-th reaction 
 (the forward reaction if the reaction is reversible),
i.e., it is the reaction vector $y_j - y_i$ if $k$ indexes the (forward)
reaction $y_i \to y_j$.

The choice of kinetics is encoded by a locally Lipschitz function $R:\Rnn^s \to \R^r$ that lists the reaction rates of the $r$ reactions as functions of the $s$ species concentrations (a pair of reversible reactions is counted only once -- if the $k$-th reaction is reversible, then $R_k$ is the forward rate minus the backward rate).  
The {\em reaction kinetics system} 
defined by a reaction network $G$ and reaction rate function $R$ is given by the following system of ODEs:
\begin{align} \label{eq:ODE}
\frac{dx}{dt} ~ = ~ \Gamma \,  R(x)~.
\end{align}
For {\em mass-action kinetics}, the setting of this paper, the coordinates of $R$ are:
\begin{equation}\label{eq:R-for-mass-action}
 R_k(x)=\left\lbrace
 \begin{array}{ll}
  \kappa_{ij} x^{y_i} & \textrm{ if $k$ indexes an irreversible reaction $y_i \to y_j$} \\
  \kappa_{ij} x^{y_i} - \kappa_{ji} x^{y_j}  & \textrm{ if $k$ indexes a reversible reaction $y_i \leftrightharpoons y_j$} \\
 \end{array}\right. 
\end{equation}

A {\em chemical reaction system} refers to the 
dynamical system (\ref{eq:ODE}) arising from a specific chemical reaction
network $G$ and a choice of rate parameters $(\kappa_{ij}) \in
\mathbb{R}^{r}_{>0}$ (recall that $r$ denotes the number of
reactions) where the reaction rate function $R$ is that of mass-action
kinetics~\eqref{eq:R-for-mass-action}.


The {\em stoichiometric subspace} is the vector subspace of
$\mathbb{R}^s$ spanned by the reaction vectors
$y_j-y_i$ (where $y_i \to y_j$ is a reaction), and we denote this by: 
\begin{equation} \label{eq:stoic_subs}
  \St~:=~ {\rm span} \{ y_j-y_i\mid y_i \to y_j {\rm ~is~a~reaction~in~} G \}~.
\end{equation}
  Note that in the setting of (\ref{eq:ODE}), one has $\St = \im(\Gamma)$.
For example, the reaction vector $(2,-1,1)$ spans the
stoichiometric subspace $\St$ for the network~\eqref{example}. 

In general, the  vector $\frac{d x}{dt}$ in  (\ref{eq:ODE}) lies in
$\St$ for all time $t$.   
In fact, a trajectory $x(t)$ beginning at a positive vector $x(0) \in
\R^s_{>0}$ remains in the {\em stoichiometric compatibility class},
which we
denote by
\begin{align}\label{eqn:invtPoly}
\invtPoly~:=~(x(0)+\St) \cap \mathbb{R}^s_{\geq 0}~, 
\end{align}
for all positive time.  That is, $\invtPoly$ is forward-invariant with
respect to the dynamics~(\ref{eq:ODE}).    
A {\em positive steady state} of a kinetics system~\eqref{eq:ODE} is a positive concentration vector $x^* \in \R_{>0}^s$ at which the ODEs~\eqref{eq:ODE}  vanish: $\Gamma R(x^*) = 0$.  


\section{The All-Encompassing Network} \label{sec:all-encompassing}
Here we introduce a network that encompasses each of the three networks in the Introduction, and also encompasses a network introduced recently by Rao~\cite{Rao}. Accordingly, our new network has $m$ components rather than 2, each with its own enzyme $E_i$ and substrate $P_i$.  Also, each component has its own number of intermediate complexes and corresponding reactions. 

We let \rev{}{} denote a reaction that may or may not be reversible: it is either $\to$ or $\rightleftharpoons$.  The \textbf{all-encompassing} reaction network is:
\begin{equation}
\begin{split}
  \xymatrix@R-1pc{
    P_1 + E_1  \ar @<.4ex> @{-^>} [r] ^-{k_{11}}
    &\ar @{--^>} [l] ^-{k_{-11}} C_{11} \ar @<.4ex> @{-^>} [r] ^-{k_{12}}
    &\ar @{--^>} [l] ^-{k_{-12}} C_{12} \ar @<.4ex> @{-^>} [r] ^-{k_{13}}
    &\ar @{--^>} [l] ^-{k_{-13}} \;\; \hdots \;\;\; \ar @<.4ex> @{-^>} [r] ^-{k_{1n_{1}}}
    &\ar @{--^>} [l] ^-{k_{-1n_{1}}} \; C_{1n_{1}} \;\; \ar @<.4ex> @{-^>} [r] ^-{k_{1,n_1+1}}
    &\ar @{--^>} [l] ^-{k_{-1,n_1+1}} P_2 + E_1
    \\
    P_2 + E_2  \ar @<.4ex> @{-^>} [r] ^-{k_{21}}
    &\ar @{--^>} [l] ^-{k_{-21}} C_{21} \ar @<.4ex> @{-^>} [r] ^-{k_{22}}
    &\ar @{--^>} [l] ^-{k_{-22}} C_{22} \ar @<.4ex> @{-^>} [r] ^-{k_{23}}
    &\ar @{--^>} [l] ^-{k_{-23}} \;\; \hdots \;\;\; \ar @<.4ex> @{-^>} [r] ^-{k_{2n_{2}}}
    &\ar @{--^>} [l] ^-{k_{-2n_{2}}} \; C_{2n_{2}} \;\; \ar @<.4ex> @{-^>} [r] ^-{k_{2,n_2+1}}
    &\ar @{--^>} [l] ^-{k_{-2,n_2+1}} P_3 + E_2
    \\
   	\vdots&&&\vdots&&\vdots
    \\
    P_m + E_m  \ar @<.4ex> @{-^>} [r] ^-{k_{m1}}
    &\ar @{--^>} [l] ^-{k_{-m1}} C_{m1} \ar @<.4ex> @{-^>} [r] ^-{k_{m2}}
    &\ar @{--^>} [l] ^-{k_{-m2}} C_{m2} \ar @<.4ex> @{-^>} [r] ^-{k_{m3}}
    &\ar @{--^>} [l] ^-{k_{-m3}} \;\; \hdots \;\;\; \ar @<.4ex> @{-^>} [r] ^-{k_{mn_{m}}}
    &\ar @{--^>} [l] ^-{k_{-mn_{m}}} \; C_{mn_{m}} \;\; \ar @<.4ex> @{-^>} [r] ^-{k_{m,n_m+1}}
    &\ar @{--^>} [l] ^-{k_{-m,n_{m}+1}} P_1 + E_m
  }
\end{split}
\label{eq:m_network_processive}
\end{equation}
where $m \in \mathbb{Z}_{\geq 2}$ and $n_1,\dots,n_m \in \mathbb{Z}_{>0}$. As indicated, we allow each reaction to possibly be irreversible (in which case only the forward reaction takes place), that is, we impose the following restrictions on the rate constants:
$$k_{ij} > 0 \text{ and } k_{-ij} \geq 0 \text{ for all } i=1,\dots,m \text{ and } j=1,\dots,n_i.$$  This network has 
$2m + (n_1 + n_2 + \cdots + n_m)$ species. 

\begin{remark}
Technically, the all-encompassing network~\eqref{eq:m_network_processive} is not one network, but many -- one for each choice of $m$, $n_i$'s, and whether each reaction is reversible or irreversible.  Abusing notation, we nevertheless call it a network.
\end{remark}

\begin{remark} \label{rem:Rao} 
The all-encompassing network~\eqref{eq:m_network_processive} generalizes the network analyzed by Rao \cite{Rao}.  To obtain our network from Rao's, each reaction is allowed to be irreversible and the final reaction in each component may be reversible.  
Accordingly, the notation in \eqref{eq:m_network_processive} is based on Rao's \cite{Rao}, but with a few changes. Keeping with the convention that $n$ denotes the number of phosphorylation sites, $n_i$ denotes the number of intermediate complexes in component $i$, whereas Rao used the notation $m_i$ \cite{Rao}. Also, we use $m$ to represent the number of components in the network.
\end{remark}

Network~\eqref{eq:m_network_processive} 
generalizes not only Rao's network, but also the three mechanisms of processive phosphorylation/dephosphorylation in the Introduction:

\begin{proposition} \label{prop:encompasses-all}
The all-encompassing network~\eqref{eq:m_network_processive} includes as special cases, the processive multisite phosphorylation networks~\eqref{eq:multisite}, \eqref{eq:irreversible}, and \eqref{eq:inhibited}.
\end{proposition}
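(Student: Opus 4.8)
The plan is purely combinatorial: for each of the three target networks I will exhibit an explicit specialization of the all-encompassing network~\eqref{eq:m_network_processive} --- a choice of $m$, of the $n_i$'s, and of which rate constants $k_{-ij}$ are set to $0$ (rendering the corresponding reaction irreversible) --- together with a renaming of the species, under which~\eqref{eq:m_network_processive} becomes literally the target network. Since a reaction network is determined, up to relabeling of species, by its sets of species, complexes, and reactions, matching these data is enough.

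First I would dispatch networks~\eqref{eq:multisite} and~\eqref{eq:irreversible} together. In both, take $m = 2$ and $n_1 = n_2 = n$, and rename species by $E_1 = K$, $E_2 = F$, $P_1 = S_0$, $P_2 = S_n$; note that because $m = 2$ the last component of~\eqref{eq:m_network_processive} closes the cycle, so the symbol ``$P_3$'' appearing there is $P_1 = S_0$. In the first component identify the intermediates $C_{11}, C_{12}, \dots, C_{1n}$ with $S_0K, S_1K, \dots, S_{n-1}K$, and in the second component identify $C_{21}, C_{22}, \dots, C_{2n}$ with $S_nF, S_{n-1}F, \dots, S_1F$. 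Under this dictionary the reaction chain of component~$i$ of~\eqref{eq:m_network_processive} is exactly the $i$-th line of the target, once the reversibility pattern is fixed: for~\eqref{eq:multisite}, make only the two final detachment steps irreversible (set $k_{-1,n+1} = k_{-2,n+1} = 0$ and keep all other $k_{-ij} > 0$); for~\eqref{eq:irreversible}, additionally make every catalytic step irreversible (set $k_{-ij} = 0$ for $i = 1,2$ and $2 \le j \le n+1$, leaving only the enzyme--substrate binding constants $k_{-11}, k_{-21} > 0$).

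For network~\eqref{eq:inhibited}, each component carries one extra intermediate complex, since the final substrate stays bound before detaching, so I would take $m = 2$ and $n_1 = n_2 = n+1$, with the same enzyme/substrate renaming $E_1 = K$, $E_2 = F$, $P_1 = \text{``}P_3\text{''} = S_0$, $P_2 = S_n$, and now identify $C_{11}, \dots, C_{1,n+1}$ with $S_0K, S_1K, \dots, S_nK$ and $C_{21}, \dots, C_{2,n+1}$ with $S_nF, S_{n-1}F, \dots, S_0F$. The reversibility pattern is: the first (binding) and last (detachment) reaction of each component reversible, every catalytic step in between irreversible; that is, $k_{-11}, k_{-1,n+2}, k_{-21}, k_{-2,n+2} > 0$ while $k_{-ij} = 0$ for $i = 1,2$ and $2 \le j \le n+1$. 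One then reads off that the resulting network is~\eqref{eq:inhibited}.

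Each verification is routine bookkeeping; the only part demanding care --- the nearest thing to an obstacle --- is keeping the indexing straight: the cyclic identification $P_{m+1} = P_1$ when $m = 2$, the count of intermediate complexes ($n$ versus $n+1$) in each component, and the fact that component $i$ has $n_i + 1$ reactions, so its final detachment step has rate constants $k_{i,n_i+1}, k_{-i,n_i+1}$. To make the correspondence fully transparent, I would present the three specializations side by side in a table listing, for each, the values of $m$, the $n_i$'s, the species renaming, and the list of indices $(i,j)$ for which $k_{-ij} = 0$.
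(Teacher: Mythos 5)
Your proposal is correct and takes essentially the same route as the paper, which also proves the proposition by tabulating, for each of the three target networks, the choice of $m$, the $n_i$'s, and the set of rate constants $k_{-ij}$ forced to vanish. Your version is in fact slightly more careful than the paper's table about the species dictionary and the exact range of $j$ for which $k_{-ij}=0$ (e.g.\ including $j=n+1$ for the irreversible catalytic and detachment steps), so no changes are needed.
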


\begin{proof}
The conditions displayed here show how the all-encompassing network~\eqref{eq:m_network_processive} reduces to each of the three networks~\eqref{eq:multisite}, \eqref{eq:irreversible}, and \eqref{eq:inhibited}:
\begin{table}[H]
\begin{tabular}{ l  l }
	Network	& Conditions \\ 
  \hline
  \eqref{eq:multisite} & $m=2$, $n:=n_1=n_2$, $k_{-i,n+1}=0$ for $i=1,2$ \\
  \hline
  \eqref{eq:irreversible} & $m=2$, $n:=n_1=n_2$, $k_{-i,j}=0$ for $i=1,2$ and $j=2,\dots,n$ \\  
  \hline  
  \eqref{eq:inhibited}& $m=2$, $n+1:=n_1=n_2$, $k_{-i,j}=0$ for $i=1,2$ and $j=2,\dots,n$ \\  
  \hline  
\end{tabular}
\end{table}
\end{proof}


We end this section by showing that the all-encompassing network is conservative.

\begin{definition} \label{def:conservative}
A {\em positive conservation law} of a network $G$ is some $c \in {\rm ker}(\Gamma^T) \cap \mathbb{R}^s_{>0}$, where $\Gamma$ is the stoichiometric matrix of $G$ and $s$ is the number of species.  A network that has a positive conservation law is {\em conservative}.
\end{definition}

\begin{lemma} \label{cor:conservative}
The all-encompassing network \eqref{eq:m_network_processive} is conservative, and thus every one of its stoichiometric compatibility classes is compact. 
\end{lemma}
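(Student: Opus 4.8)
The plan is to exhibit an explicit positive conservation law and then invoke the standard argument that a conservative network has compact compatibility classes. The natural guess for the conservation law reflects the mass of each species measured in ``enzyme-plus-substrate'' units: I would define $c \in \R^s$ by assigning the value $1$ to each free substrate $P_i$ and to each enzyme $E_i$, and the value $2$ to each intermediate complex $C_{ij}$. This vector is strictly positive, so it only remains to show $c \in \ker(\Gamma^T)$, i.e., that $c$ is orthogonal to every reaction vector of~\eqref{eq:m_network_processive}.

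To verify this, I would go through the three types of reactions appearing in each component $i$ and check that the total $c$-weight of the reactant equals that of the product. The binding reaction $P_i + E_i \rightleftharpoons C_{i1}$ has weight $1+1 = 2$ on each side; each internal shuttling reaction $C_{ij} \rightleftharpoons C_{i,j+1}$ has weight $2$ on each side; and the exit reaction $C_{i n_i} \rightleftharpoons P_{i+1} + E_i$ (with the subscript on $P$ read cyclically, so $P_{m+1} := P_1$) has weight $2$ on the left and $1+1 = 2$ on the right. Hence $c \cdot (y_j - y_i) = 0$ for every reaction $y_i \to y_j$, so $c \in \ker(\Gamma^T) \cap \R^s_{>0}$ and the network is conservative in the sense of Definition~\ref{def:conservative}.

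For the compactness conclusion, fix a stoichiometric compatibility class $\invtPoly = (x(0) + \St) \cap \R^s_{\geq 0}$. It is closed, being an intersection of two closed sets. Since $c \in \ker(\Gamma^T) = \St^{\perp}$, the linear functional $x \mapsto c \cdot x$ is constant on the affine subspace $x(0)+\St$, with value $M := c \cdot x(0) \geq 0$. For any $x \in \invtPoly$ we then have $x_\ell \geq 0$ for all $\ell$ and $\sum_{\ell} c_\ell x_\ell = M$ with every $c_\ell > 0$, forcing $0 \leq x_\ell \leq M/c_\ell$; thus $\invtPoly$ is bounded, hence compact. The only point requiring attention is the bookkeeping for the cyclic structure of~\eqref{eq:m_network_processive} — confirming that the ``exit'' complex of component $i$ releases the substrate that ``enters'' component $i+1$ modulo $m$ — but with the weights $1$ on $P_i, E_i$ and $2$ on $C_{ij}$ in hand, each balance is immediate, so I do not anticipate a genuine obstacle.
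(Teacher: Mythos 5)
Your proposal is correct and uses exactly the same conservation law as the paper ($c_{P_i}=c_{E_i}=1$, $c_{C_{ij}}=2$), with the same closed-plus-bounded argument for compactness; you simply spell out the reaction-by-reaction verification and the boundedness estimate that the paper leaves implicit. No issues.
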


\begin{proof}
The vector $c \in \mathbb{R}^{2m + (n_1 + n_2 + \cdots + n_m)}_{>0}$, defined by $c_{P_i}:=1$, $c_{E_i}:=1$, and $c_{C_{ij}}:=2$ for all relevant $i$ and $j$, is a positive conservation law.  Every stoichiometric compatibility class is closed by construction and bounded due to the positive conservation law, and thus is compact.
\end{proof}

\section{Main Result: Global Convergence of All-Encompassing Network} \label{sec:main-result}

Our main result, which will be proven in Section \ref{sec:proof-subsec}, states that the all-encompassing network \eqref{eq:m_network_processive} is globally convergent:
\begin{theorem} \label{thm:main-result}
For any chemical reaction system~\eqref{eq:ODE} arising from the all-encompassing network~\eqref{eq:m_network_processive} and any choice of rate constants $k_{ij} > 0$ and $k_{-ij} \geq 0$,
\begin{enumerate}
	\item each compatibility class $\invtPoly$ contains a unique steady state $\eta$, 
	\item $\eta$ is a positive steady state, and 
	\item $\eta$ is the global attractor of $\invtPoly$.
\end{enumerate}
\end{theorem}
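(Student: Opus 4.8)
The plan is to bring the network into the scope of the graph-theoretic global-convergence criterion of Angeli, De Leenheer, and Sontag \cite{Angeli2010}, after first discarding the intermediate complexes. Let $\tilde{G}$ denote the network obtained from \eqref{eq:m_network_processive} by deleting every intermediate complex $C_{ij}$ and collapsing each component's chain to its overall reaction $P_i + E_i \leftrightharpoons P_{i+1} + E_i$ (indices read modulo $m$, so $P_{m+1} = P_1$, and with ``$\leftrightharpoons$'' standing for a reaction that may be irreversible). By the intermediate-removal results of Marcondes de Freitas, Wiuf, and Feliu \cite{Freitas2}, the hypotheses of the convergence criterion — an $SR$-graph with no e-cycle, an appropriate $R$-graph condition, and bounded-persistence — need only be verified for $\tilde{G}$, and their validity there transfers to the global-convergence conclusion for the original network \eqref{eq:m_network_processive}. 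Boundedness of all trajectories, indeed compactness of every compatibility class, is already in hand from Lemma~\ref{cor:conservative}. It is worth noting that for $m=2$ with no intermediates $\tilde{G}$ is the $1$-site futile cycle of Angeli and Sontag \cite{AS}, and for general $m$ it is Rao's cyclic network \cite{Rao} without intermediates, so the reduction should indeed land inside the criterion; this guides the verification.

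First I would construct the $SR$-graph $G_{SR} = (V_{SR}, E_{SR}, L_{SR})$ of $\tilde{G}$: its species nodes are $P_1, \dots, P_m$ and $E_1, \dots, E_m$, its reaction nodes are the $m$ overall reactions, and the edges record that reaction $i$ has $P_i$ as a reactant species, $P_{i+1}$ as a product species, and $E_i$ as a catalyst. Since each $E_i$ meets only reaction $i$, the only simple cycles through two or more reaction nodes are the global cycle $P_1 - R_1 - P_2 - R_2 - \cdots - P_m - R_m - P_1$ (together with the short catalytic loops at the $E_i$), and a direct sign count shows each is an o-cycle; hence $\tilde{G}$ has no e-cycle and is monotone in reaction coordinates with respect to a suitable orthant cone $\mathcal{K}$. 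I would then verify the $R$-graph hypothesis: because consecutive reactions $R_i$ and $R_{i+1}$ share the species $P_{i+1}$, the $R$-graph $G_R = (V_R, E_R, L_R)$ is connected, which is what the criterion requires (it makes the cone $\mathcal{K}$ well-defined and the associated monotone system irreducible).

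The step I expect to be the crux is establishing bounded-persistence of $\tilde{G}$, i.e., that no trajectory starting in $\R^s_{>0}$ has an $\omega$-limit point on the boundary of its compatibility class. This is a dynamical rather than combinatorial condition — hence its separate treatment in the appendix — and I would handle it by a siphon analysis: showing that every minimal siphon of $\tilde{G}$ contains the support of a positive conservation law (these are easily enumerated from the cyclic structure and the enzyme/substrate pairings), so that, by the standard siphon argument together with the compactness supplied by Lemma~\ref{cor:conservative}, the boundary faces of each compatibility class are repelling.

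Assembling these ingredients, the criterion of \cite{Angeli2010, Freitas2} applies to \eqref{eq:m_network_processive} and yields that every compatibility class $\invtPoly$ contains a unique steady state $\eta$ and that $\eta$ is the global attractor of $\invtPoly$, which are parts (1) and (3). For part (2), positivity of $\eta$ is immediate from bounded-persistence: since no trajectory emanating from a positive state accumulates on $\partial \invtPoly$, the attractor $\eta$ cannot lie on $\partial \invtPoly$, so $\eta \in \R^s_{>0}$; equivalently, any boundary steady state would be supported on a siphon, which is excluded by the siphon condition already established for persistence.
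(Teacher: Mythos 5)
Your overall strategy is the paper's: remove intermediates, verify the $R$-/$SR$-graph conditions of Angeli, De Leenheer, and Sontag on the reduced network, establish bounded-persistence by a siphon argument, and invoke the transfer results of Marcondes de Freitas, Wiuf, and Feliu. However, there is a concrete error in your reduction step. The intermediate-removal operation of Definition~\ref{def:remove} replaces $y \to Y \to y'$ by $y - e \to y' - e$, where $e$ is the part supported on the \emph{common} species of $y$ and $y'$; since $E_i$ appears in both $P_i + E_i$ and $P_{i+1} + E_i$, collapsing component $i$ yields $P_i \to P_{i+1}$, not $P_i + E_i \to P_{i+1} + E_i$. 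Your proposed $\tilde{G}$, in which each $E_i$ survives as a catalyst on both sides, is not obtainable by the removal procedure, so the transfer theorems do not apply to it; moreover a reaction with $E_i$ as both reactant and product species violates (G1) and makes the $SR$-graph edge labels at $E_i$ ambiguous (your ``short catalytic loops at the $E_i$'' are a symptom of this). If you instead carry out the removal correctly and cancel every enzyme, you hit the difficulty the paper flags explicitly: for $m=2$ (the case containing all three motivating networks) the fully collapsed network consists of two copies of the same reversible reaction $P_1 \leftrightharpoons P_2$, which the framework of \cite{Freitas2} does not handle cleanly. The paper's Lemma~\ref{lem:removals} resolves this by stopping the reduction one step early in the last component, retaining $C_{mn_m}$ and $E_m$ to obtain the three-reaction network \eqref{reduced}, whose $R$-graph is a triangle with all labels $+$ and whose stoichiometric matrix has $(1,1,1)$ in its kernel. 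Your proof needs this (or an equivalent) choice of reduced network to go through.

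Two secondary points. First, Proposition~\ref{prop:summary-stability-results} requires bounded-persistence of the \emph{original} network $G$, not of the reduced one, so your siphon analysis should be carried out on \eqref{eq:m_network_processive} itself, as in Lemma~\ref{lem:bounded-persistent}. Second, the criterion only delivers a global attractor of $\invtPoly \cap \R^s_{>0}$; to conclude that $\eta$ attracts all of $\invtPoly$, including boundary initial conditions as the theorem asserts, you need the stronger statement that $\omega(s_0) \cap \partial\invtPoly = \emptyset$ for every $s_0 \in \invtPoly$ (the ``Moreover'' part of Lemma~\ref{lem:bounded-persistent}) together with compactness of $\invtPoly$; your closing paragraph only argues positivity of $\eta$ and only for trajectories starting at positive states.
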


As a special case of Theorem~\ref{thm:main-result}, the three processive multisite phosphorylation networks from the Introduction are globally convergent:
\begin{proof}[Proof of Theorem~\ref{thm:intro}]
Follows from Proposition~\ref{prop:encompasses-all} and Theorem~\ref{thm:main-result}.
\end{proof}

Another special case of Theorem~\ref{thm:main-result} is Rao's result~\cite{Rao} (recall Remark~\ref{rem:Rao}). 
However, our proof differs from his (see Remark~\ref{rem:Rao2}).

\section{Proof of Main Result Using Reduced Networks/Graphs} \label{proof2} 
In this section, we prove Theorem~\ref{thm:main-result}.  To do so, we must recall 
how to construct two graphs from a chemical reaction network: the SR-graph and the R-graph. 
These graphs appear in the global convergence criterion from \cite{Angeli2010} that we will use.  Moreover, we will use a theorem from \cite{Freitas2} that allows us to first remove intermediate complexes to produce a reduced network, and then check the same graph-theoretic conditions on this simpler network. 

We recall the relevant setup and definitions in Sections~\ref{sec:assumptions}--\ref{sec:remove} and then state the relevant results from \cite{Freitas2} in Section~\ref{sec:results-from-freitas}. 
Accordingly, much of Sections~\ref{sec:assumptions}--\ref{sec:results-from-freitas} follow that in \cite{Angeli2010,Freitas2}.  Finally, our proof appears in Section~\ref{sec:proof-subsec}.


\subsection{Assumptions} \label{sec:assumptions}
In order for the results in \cite{Freitas2} to apply, a reaction network $(\mathcal{S}, \mathcal{C}, \mathcal{R})$ must satisfy the following assumptions\footnote{These assumptions do {\em not} limit the networks we can consider. Instead they clarify \textit{how} we represent networks.}:
\begin{enumerate}
\item for each complex $y \in \mathcal{C}$, there exists a reaction in $\mathcal{C}$ that has $y$ as a reactant or a product, and 
\item each species is contained in at least one complex.
\end{enumerate}

Some theorems in~\cite{Freitas2} additionally require the following conditions:

\noindent (G1) There are no auto-catalytic reactions, meaning that no species can be both a reactant species and a product species in any reaction. 

\noindent (G2) Each species in $\mathcal{S}$ takes part in at most two reactions in $\mathcal{R}$.

\noindent (G3) The network is {conservative} (recall Definition~\ref{def:conservative}). 

\begin{remark} \label{remark:rs}
The results in \cite{Freitas2} require assumptions on the choice of kinetics.  These assumptions, labeled (r1), (r2), and (r3), are satisfied by mass-action kinetics (such as our phosphorylation systems), power-law kinetics, and Hill kinetics~\cite[Remark 1]{Freitas2}, so they are omitted here.
\end{remark}

\subsection{The SR-graph and R-graph of a Reaction Network} \label{sec:graphs}
Here we explain how to construct two graphs from a chemical reaction network: the directed SR-graph and R-graph. Notationally, we write a directed, labeled graph as $G = (V,E,L)$, with vertex set $V$, edge set $E$, and labeling $L: E \to \{+,-\}$ (all edge labels here will be $+$ or $-$). 
A directed edge from $X$ to $Y$ is denoted by $\stackrel{\mathclap{\normalfont\mbox{$\rightarrow$}}}{XY}$.

A \textit{directed SR-graph}, denoted by $G_{SR} = (V_{SR},E_{SR},L_{SR})$, is a directed graph constructed from a chemical reaction network $(\mathcal{S},\mathcal{C},\mathcal{R})$ as follows. 
The vertex set $V_{SR}$ is the union of all species and reactions in the network (hence the name ``SR'').  
The edges and their labels are defined here\footnote{Our definitions for SR-graph and R-graph differ from those in~\cite{Freitas2}, but are equivalent.}:
\begin{enumerate}
\item If a species $S$ is a reactant species of a (reversible or irreversible) reaction $R \in \mathcal{R}$ or a product species of a reversible reaction $R \in \mathcal{R}$,  then $\stackrel{\mathclap{\normalfont\mbox{$\rightarrow$}}}{SR} \; \in E_{SR}$ and $\stackrel{\mathclap{\normalfont\mbox{$\rightarrow$}}}{RS} \; \in E_{SR}$.
\item If 
$S$ is a product species of an irreversible 
$R \in \mathcal{R}$,  then $\stackrel{\mathclap{\normalfont\mbox{$\rightarrow$}}}{RS} \; \in E_{SR}$.
\item Let $S$ be a species and $R$ a reaction.  If $S$ is a reactant species of $R$ (of the forward reaction of $R$ if $R$ is reversible), then 
	$L_{SR}(\stackrel{\mathclap{\normalfont\mbox{$\rightarrow$}}}{SR} ) := +$ and 
    $L_{SR}(\stackrel{\mathclap{\normalfont\mbox{$\rightarrow$}}}{RS} ) := +$. 
   If $S$ is a product species of $R$, then
       $L_{SR}(\stackrel{\mathclap{\normalfont\mbox{$\rightarrow$}}}{RS} ) := -$, and, if additionally $\stackrel{\mathclap{\normalfont\mbox{$\rightarrow$}}}{RS} \; \in E_{SR}$, then 
         $L_{SR}(\stackrel{\mathclap{\normalfont\mbox{$\rightarrow$}}}{SR} ) := -$. 		
\end{enumerate}

An \textit{R-graph} is an undirected graph $G_R = (V_R, E_R, L_R)$ created from a chemical reaction network
(in fact, from its directed SR-graph) as follows:
\begin{enumerate}
\item The vertex set $V_R$ is the set of reactions in the reaction network.
\item An edge connects reactions $R_i$ and $R_j$ if there is a length-2 path connecting $R_i$ and $R_j$ in the SR-graph. This edge is labeled with the opposite of the product of the two labels along the path. An edge may have more than one label, if there are multiple such paths.
\end{enumerate}

\begin{example}
Recall the 1-site phosphorylation system from Example \ref{ex:n=1}. The directed SR-graph and R-graph for this network are shown in Figure~\ref{ex:n=1_graphs}.

\begin{figure*}[h!]
\centering
\begin{subfigure}[t]{0.5\textwidth}
\centering

\begin{tikzpicture}

\node[draw] (R1) at (0,0) {$R_1$};
\node[draw] (R2) at (4,0) {$R_2$};
\node[draw] (R3) at (4,-2) {$R_3$};
\node[draw] (R4) at (0,-2) {$R_4$};
\node[] (S0) at (-1,-1) {$S_0$};
\node[] (E) at (2,0.5) {$K$};
\node[] (S0E) at (2,-0.5) {$S_0 K$};
\node[] (S1) at (5,-1) {$S_1$};
\node[] (S1F) at (2,-1.5) {$S_1F$};
\node[] (F) at (2,-2.5) {$F$};



\draw[<->,thick] (R1) to node[midway,above]{+} (S0);
\draw[<->,thick] (R1) to node[midway,above]{+} (E);
\draw[<->,thick] (R1) to node[midway,below]{-} (S0E);

\draw[->,thick] (R2) to node[midway,above]{-} (S1);
\draw[->,thick] (R2) to node[midway,above]{-} (E);
\draw[<->,thick] (R2) to node[midway,below]{+} (S0E);

\draw[<->,thick] (R3) to node[midway,below]{+} (S1);
\draw[<->,thick] (R3) to node[midway,above]{-} (S1F);
\draw[<->,thick] (R3) to node[midway,below]{+} (F);

\draw[->,thick] (R4) to node[midway,below]{-} (S0);
\draw[<->,thick] (R4) to node[midway,above]{+} (S1F);
\draw[->,thick] (R4) to node[midway,below]{-} (F);
\end{tikzpicture}
\caption{The directed SR-graph.}
\end{subfigure}%
    ~
\begin{subfigure}[t]{0.5\textwidth}
\centering
\begin{tikzpicture}
\node[draw] (R1) at (0,0) {$R_1$};
\node[draw] (R2) at (2,0) {$R_2$};
\node[draw] (dots) at (2,-2) {$R_3$};
\node[draw] (Rm) at (0,-2) {$R_4$};

\draw[thick]  (R1)  -- (R2) node[midway,above] {+};
\draw[thick]  (R2)  -- (dots) node[midway,right] {+};
\draw[thick]  (dots)  -- (Rm) node[midway,below] {+};
\draw[thick]  (Rm)  -- (R1) node[midway,left] {+};
\end{tikzpicture}
\caption{The R-graph.}
\end{subfigure}
\caption{The directed SR-graph and R-graph for the 1-site phosphorylation network.}
\label{ex:n=1_graphs}
\end{figure*}
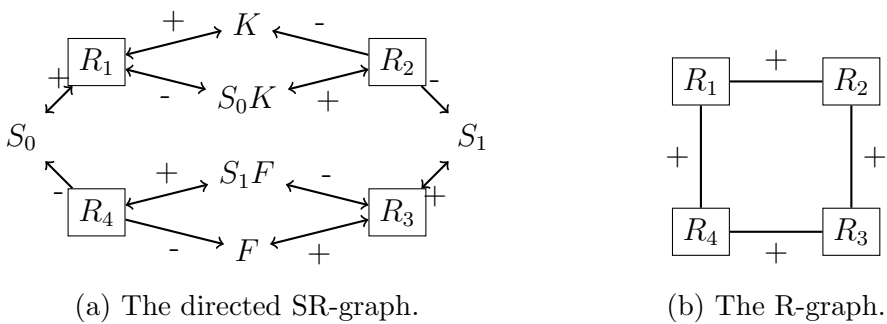

\end{example}

Next we define a property of an R-graph that, in Section~\ref{sec:results-from-freitas}, will help establish the global stability of a system. 

\begin{definition} ~
An R-graph has the {\em positive loop property} if every simple loop has an even number of negative edges.
\end{definition}

\begin{example}
Consider again the SR-graph and R-graph in Figure \ref{ex:n=1_graphs}. 
The R-graph has the positive loop property, because it has no negative labels.
\end{example}

\begin{notation} \label{rem:cone}
For a network whose R-graph has the positive loop property, we define an orthant cone (recall that $r$ denotes the number of reactions):
\begin{equation} 
\mathcal{K} := \{(x_1,\dots,x_r) \in \R^r \mid {\rm sign}(x_i) \in \{0, \sigma_i\} \text{ for all } i=1, \dots, r \},
\end{equation}
by defining a sign pattern $\sigma = (\sigma_1,\dots,\sigma_r) \in \{+,- \}^r$ as follows.  If the R-graph is connected, 
set $\sigma_1 := +$, and then for $i \in \{2,3,\dots, r\}$, choose any simple path $1 = i_0 \text{---} i_1 \text{---} \cdots \text{---} i_k = i$ in the R-graph from 1 to $i$, and define $\sigma_i$ to be the product of the labels along the path:
\begin{equation} \label{sign_pattern}
\sigma_i := \prod_{d=1}^k L_R(\{R_{i_{d-1}},R_{i_d}\}).
\end{equation}
The R-graph has the positive loop property, so every simple loop has an even number of negative edges, and thus $\sigma_i$ does not depend on the choice of path.

If the R-graph has more than one connected component, we apply the same procedure to each component, starting with $\sigma_i := +$ for the smallest index $i \in \{1,\dots,m\}$ such that $R_i$ belongs to that component.    
\end{notation}

\subsection{Removing Intermediates} \label{sec:remove}
As mentioned above, we will prove global stability via criteria on a network's SR-graph and R-graph.  These graphs are large in the case of the all-encompassing network, so we will use results in  \cite{Freitas2} (described in Section~\ref{sec:results-from-freitas}) that allow us to first simplify the network by removing intermediate complexes, before checking the required conditions on the simpler SR- and R-graphs. This removal procedure is described now.

\noindent
\begin{condition}[Conditions for removing an intermediate]
Let $G=(\mathcal{S},\mathcal{C},\mathcal{R})$ be a network with species set $\mathcal{S} = \{ S_1, S_2, \dots, S_s \}$.  
The \textit{support} of a complex $y = (\alpha_1,\dots,\alpha_s) \in \R^s_{\geq 0}$ is the set of constituent species of the complex:
$$\text{supp } y := \{S_i \in \mathcal{S} \mid \alpha_i > 0 \}.$$

For a complex $Y \in \mathcal{C}$, we define two conditions:
\begin{enumerate}
\item[(l1)] $Y$ consists of exactly one species which appears with coefficient 1 ($Y=S_i$ for some $i$) and does not appear in any other complex in the network.
\item[(l2)] There exist unique complexes $y = \alpha_1S_1 + \cdots + \alpha_s S_s$ and $y' = \alpha_1'S_1 + \cdots + \alpha_s'S_s$ such that the following hold:
\begin{enumerate}[(i)]
\item Either $y \to Y$ or $y \leftrightharpoons Y$ is a reaction in $\mathcal{R}$.
\item Either $Y \to y'$ or $Y \leftrightharpoons y'$ is a reaction in $\mathcal{R}$.
\item  Letting $\mathcal{E} := \text{supp } y \cap \text{supp } y'$ denote the set of common species of $y$ and $y'$, then $\sum_{S_i \in \mathcal{E}} \alpha_iS_i = \sum_{S_i \in \mathcal{E}} \alpha_i'S_i =: e$. 




\end{enumerate}
\end{enumerate}
\end{condition}

\begin{definition} \label{def:remove}
Given a network $G=(\mathcal{S},\mathcal{C},\mathcal{R})$ and a complex $Y \in \mathcal{C}$ that satisfies conditions (l1) and (l2), the reduced reaction network $G^*=(\mathcal{S}^*,\mathcal{C}^*,\mathcal{R}^*)$ obtained by \textbf{removing the intermediate} $Y$ is as follows. First, $\mathcal{R^*} := \mathcal{R}_c^* \cup \mathcal{R}_Y^*$, where $\mathcal{R}_Y^*$ is the subset of reactions in $\mathcal{R}$ that do not have $Y$ as a product or reactant, and 
\begin{equation} \label{eq:r-star-y}
\mathcal{R}^*_Y := \begin{cases}
\{y-e \leftrightharpoons y' - e\}, \; \text{if } y \leftrightharpoons Y \in \mathcal{R} \text{ and } Y \leftrightharpoons y' \in \mathcal{R} \\
\{y-e \rightarrow y' - e\}, \; \text{if } y \rightarrow Y \in \mathcal{R} \text{ or } Y \rightarrow y' \in \mathcal{R} \\
\end{cases}.
\end{equation}
Next, $\mathcal{C}^*$ is the set of reactant and product complexes of the reactions in $\mathcal{R^*}$.  Finally, $\mathcal{S}^*$ is the set of species that appear in at least one complex in $\mathcal{C^*}$.
\end{definition}

This procedure removes one intermediate. Any number of intermediates may be removed successively if conditions (l1) and (l2) are met at each step.

\begin{example} \label{ex:1-site-reduced}
Consider the 1-site phosphorylation network \eqref{eq:network_n=1}. Taking $S_0 + K$ and $S_1+K$ to be the unique complexes $y$ and $y'$ required by (l2), we can remove the intermediate $S_0K$, producing the reduced network:
  \begin{equation*} \label{ex:n=1_reduced}
\begin{split} 
  \xymatrix@R-2pc{
	&S_0 \ar [rr]
    && S_1
    \\
    &S_1 + F \ar @<.4ex> @{-^>} [r]
    & \ar @{-^>} [l] S_1F \ar [r]
    & S_0 + F
   }
\end{split}
\end{equation*}
Notice that $K$ is also removed, because it is in both $S_0 + K$ and $S_1+K$. 
\end{example}

The next lemma uses successive removal of intermediates to simplify the all-encompassing network. Recall that \rev{}{} denotes a reaction that may or may not be reversible.


\begin{lemma} \label{lem:removals}
The following network can be obtained from the all-encompassing network \eqref{eq:m_network_processive} by successive removal of intermediates:
\begin{equation} \label{reduced}
\begin{split}
  \xymatrix@R-2pc{
    R_1^* :
    &P_1 \ar @<.4ex> @{-^>} [r]
    &\ar @{--^>} [l] P_m
    \\
    R_2^* :
    &P_m + E_m \ar @<.4ex> @{-^>} [r]
    &\ar @{--^>} [l] C_{mn_m}
    \\
    R_3^* :
    &C_{m n_m} \ar @<.4ex> @{-^>} [r]
    &\ar @{--^>} [l] P_1 + E_m
   }
\end{split}
\end{equation}

\end{lemma}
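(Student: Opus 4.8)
The plan is to produce an explicit sequence of intermediate removals, in the sense of Definition~\ref{def:remove}, that turns~\eqref{eq:m_network_processive} into~\eqref{reduced}, checking conditions (l1) and (l2) at each step. I would organize the removals into three phases. At every step the complex $Y$ to be removed is a single species that, at that moment, appears with coefficient~$1$ in exactly one complex and has exactly two neighboring complexes, so (l1) is automatic and only the common-species requirement (l2)(iii) needs attention.

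\emph{Phase 1: collapse components $1$ through $m-1$.} Fix $i \le m-1$ and remove $C_{i1}, C_{i2}, \dots, C_{in_i}$ in this order. For $1 \le j \le n_i-1$, the two complexes flanking $C_{ij}$ when it is removed are $P_i+E_i$ (which has survived the earlier removals in this component) and a one-species complex ($C_{i,j+1}$, or the original complex $C_{in_i}$ once $j=n_i-1$); these share no species, so $e=0$ in the notation of~(l2), and by~\eqref{eq:r-star-y} the two reactions merge into $P_i+E_i \to C_{i,j+1}$, which is reversible precisely when both merged reactions are. When finally $C_{in_i}$ is removed, its flanking complexes are $P_i+E_i$ and $P_{i+1}+E_i$; since $P_i$ and $P_{i+1}$ are distinct species, their only common species is $E_i$, occurring with coefficient $1$ on each side, so (l2)(iii) holds with $e=E_i$, and the component collapses to the single reaction $P_i \to P_{i+1}$ (reversible precisely when every reaction of the original $i$-th component was). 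Doing this for $i=1,\dots,m-1$ leaves the $m$-th component untouched and, alongside it, the chain $P_1 \to P_2 \to \cdots \to P_m$.

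\emph{Phase 2: trim the last component.} In the $m$-th component remove $C_{m1}, \dots, C_{m,n_m-1}$ in this order (nothing to do if $n_m=1$). As in Phase~1 each such removal has $e=0$, and none of them involves $P_1$, $P_m$, or $E_m$; afterward the $m$-th component is exactly the two reactions $R_2^* \colon P_m+E_m \to C_{mn_m}$ and $R_3^* \colon C_{mn_m} \to P_1+E_m$, with $C_{mn_m}$ deliberately kept. \emph{Phase 3: collapse the $P$-chain.} Remove $P_2, P_3, \dots, P_{m-1}$ in this order; when $P_j$ is removed its only occurrence is in the complex~$P_j$, its two neighbors are $P_1$ (via the reaction made at the previous step, or $P_1 \to P_2$ when $j=2$) and $P_{j+1}$, which share no species, so $e=0$ and the two reactions concatenate into $P_1 \to P_{j+1}$. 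After the last removal we are left with $R_1^* \colon P_1 \to P_m$ together with $R_2^*$ and $R_3^*$, that is, exactly network~\eqref{reduced}. The degenerate cases are consistent: if $m=2$ Phase~3 is vacuous and Phases~1--2 already produce~\eqref{reduced}, and if some $n_i=1$ the corresponding inner loop is merely shorter.

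The whole argument lives in verifying (l2) at each step; (l1) is trivial because every removed species occurs in one complex. There are only two kinds of step: the generic one, where the two flanking complexes share no species ($e=0$, reductions are plain concatenations), and the one step in each of the first $m-1$ components where the flanking complexes $P_i+E_i$ and $P_{i+1}+E_i$ share the enzyme $E_i$, where one must check that the $E_i$-parts of the two complexes agree so that $e=E_i$ is well defined. The main obstacle is therefore not any single computation but the bookkeeping --- keeping track of which complexes remain after each removal, so that the complexes $y$ and $y'$ demanded by~(l2) really are the only candidates and the edge cases collapse as claimed. Fixing the three phases and their removal orders as above makes this routine.
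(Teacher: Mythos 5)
Your proposal is correct and follows essentially the same route as the paper: the paper likewise removes $C_{i1},\dots,C_{i,n_i-1}$ within each component, then collapses components $1,\dots,m-1$ by removing $C_{in_i}$ together with $E_i$ (the one step where $e=E_i\neq 0$), and finally removes $P_2,\dots,P_{m-1}$. The only difference is the ordering of these removals (you fully collapse components $1$ through $m-1$ before touching the $m$-th), which changes nothing in the verification of (l1) and (l2).
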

\begin{proof}
First, it is straightforward to check that for $i=1,2,\dots, m$, we can successively remove the intermediates $C_{i1}, C_{i2}, \dots, C_{i,n_i-1}$ from the all-encompassing network.  
The resulting network is:
\begin{equation*} 
\begin{split} 
  \xymatrix@R-2pc{
    P_1 + E_1 \ar @<.4ex> @{-^>} [r]
    &\ar @{--^>} [l] C_{1n_1} \ar @<.4ex> @{-^>} [r]
    & \ar @{--^>} [l]  P_2 + E_1 
	\\
	P_2 + E_2 \ar @<.4ex> @{-^>} [r]
    &\ar @{--^>} [l] C_{2n_2} \ar @<.4ex> @{-^>} [r]
    & \ar @{--^>} [l]  P_3 + E_2 
    \\
    \vdots&&\vdots
    \\
    P_m + E_m \ar @<.4ex> @{-^>} [r]
    &\ar @{--^>} [l] C_{mn_m} \ar @<.4ex> @{-^>} [r]
    & \ar @{--^>} [l]  P_1 + E_m 
   }
\end{split}
\end{equation*}
Next, the intermediates $C_{1n_1}, C_{2n_2}, \dots, C_{m-1,n_{m-1}}$ can be removed (and at each step the corresponding $E_i$ as well, as per \eqref{eq:r-star-y}), which results in:
%
%
%
\begin{equation*} 
\begin{split} 
  \xymatrix@R-2pc{
    P_1 \ar @<.4ex> @{-^>} [rr]
    &&\ar @{--^>} [ll] P_2
    \\
    \vdots&&\vdots
    \\
     P_{m-1} \ar @<.4ex> @{-^>} [rr]
    &&\ar @{--^>} [ll] P_m 
    \\         
    P_m + E_m \ar @<.4ex> @{-^>} [r]
    &\ar @{--^>} [l] C_{mn_m} \ar @<.4ex> @{-^>} [r]
    & \ar @{--^>} [l]  P_1 + E_m 
   }
\end{split}
\end{equation*}


If $m=2$, we are done. Otherwise, we 
successively remove $P_2,P_3,\dots,P_{m-1}$. 
This results in the desired network \eqref{reduced}.
%
\end{proof}

\begin{remark}
Network \eqref{reduced} in Lemma \ref{lem:removals} can be reduced further, by removing the last intermediate $C_{n,n_m}$ to obtain the network:
\begin{equation*}
\begin{split}
  \xymatrix@R-2pc{
    &P_1 \ar @<.4ex> @{-^>} [r]
    &\ar @{--^>} [l] P_{m}
    \\
    &P_m \ar @<.4ex> @{-^>} [r]
    &\ar @{--^>} [l] P_1
   }
\end{split}
\end{equation*}
However, when both of these reactions are reversible, then, following~\cite{Freitas2}, we would need to view the network as having two copies of the same (reversible) reaction.  To avoid this complication, we will use network~\eqref{reduced}.
\end{remark}

%

\subsection{Stability Results From~\cite{Angeli2010,Freitas2}} \label{sec:results-from-freitas}
To state results from~\cite{Angeli2010,Freitas2} that we will use, we need some definitions:

\begin{definition} \label{def:bounded_persistent} ~
\begin{enumerate}
\item The \textbf{$\omega$-limit set} of a trajectory $\sigma(t,s_0)$ of \eqref{eq:ODE} with initial condition $s_0$ is its set of limit points:
$$\omega(s_0) := \bigcap_{\tau \gg 0} \overline{\bigcup_{t \gg \tau} \{\sigma(t,s_0)\}}.$$
\item A network $G$ with $s$ species is {\bf bounded-persistent} if for all chemical reaction systems arising from $G$ and for all initial conditions $s_0 \in \mathbb{R}^s_{>0}$, the {$\omega$-limit set} of the resulting trajectory does {\em not} meet the boundary of the nonnegative orthant: $\omega(s_0) \cap \partial \R^{s}_{\geq 0} = \emptyset$. 
\end{enumerate}
\end{definition}

The following proposition follows directly from results of Marcondes de Freitas, Wiuf, and Feliu~\cite[Theorems 1--2]{Freitas2} and (as summarized in~\cite[Proposition 3]{Freitas2}
and Remark 6) Angeli, De Leenheer, and Sontag \cite{Angeli2010}.

\begin{proposition} \label{prop:summary-stability-results}f
Let $G$ be a reaction network satisfying (G1)--(G3), and let $G^*$ be a reaction network obtained from $G$ by successive removal of intermediates. Let $\Gamma^*$ be the stoichiometric matrix of $G^*$.   Assume that: 
\begin{enumerate}
\item $G$ is bounded-persistent, 
\item the R-graph of $G^*$ is connected and
has the positive loop property (so, from Notation \ref{rem:cone}, we can let $\mathcal{K}^*$ be the orthant cone constructed from this R-graph), and 
\item ${\rm ker }(\Gamma^*) \cap {\rm int } (\mathcal{K}^*) \neq \emptyset$, where ${\rm int } (\mathcal{K}^*)$ is the relative interior of $\mathcal{K}^*$.
\end{enumerate}
Then for the chemical reaction system\footnote{In fact, other kinetics besides mass-action also work (recall Remark~\ref{remark:rs}).} arising from $G$ and any choice of rate constants, each compatibility class $\invtPoly$ contains a unique steady state $\eta$, this steady state $\eta$ is a positive steady state, and $\eta$ is the global attractor of $\invtPoly \cap \mathbb{R}^s_{>0}$, where $s$ is the number of species.
\end{proposition}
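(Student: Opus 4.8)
The plan is to derive the proposition by stitching together two bodies of prior work: the intermediate-removal theorems of Marcondes de Freitas, Wiuf, and Feliu (\cite{Freitas2}, Theorems 1--2) and the monotone-systems global-convergence criterion of Angeli, De Leenheer, and Sontag, as repackaged in \cite[Proposition 3 and Remark 6]{Freitas2}. The substance of the argument is verification, not computation: one checks that hypotheses (1)--(3) of the statement are exactly the hypotheses those cited results demand, given the standing assumptions (G1)--(G3) on $G$ together with the kinetics assumptions (r1)--(r3), which hold for mass-action kinetics by Remark~\ref{remark:rs}.

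First I would spell out what the Angeli--De Leenheer--Sontag criterion gives for the reduced network $G^*$. Connectedness of its R-graph makes the sign pattern $\sigma$, and hence the orthant cone $\mathcal{K}^*$ of Notation~\ref{rem:cone}, well defined; the positive loop property makes the reaction-coordinate dynamics on $G^*$ monotone (cooperative) with respect to $\mathcal{K}^*$; and the nonemptiness of $\ker(\Gamma^*) \cap \mathrm{int}(\mathcal{K}^*)$ supplies the strictly signed flux vector that, in the Angeli--Sontag machinery, forces each compatibility class of $G^*$ to carry a unique positive steady state. Combined with persistence (trajectories bounded away from $\partial \R^s_{\geq 0}$), this yields for $G^*$ that each compatibility class contains a unique positive steady state and that it is the global attractor of the interior of the class.

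Next I would invoke the lifting half of the argument. Under (G1)--(G3), the results of \cite{Freitas2} relate the dynamics of $G$ to those of $G^*$ so that, if $G$ is bounded-persistent --- hypothesis (1) --- then the global-convergence conclusion established for $G^*$ transfers back to $G$: each compatibility class $\invtPoly$ of $G$ contains a unique steady state $\eta$, $\eta$ is positive, and $\eta$ is the global attractor of $\invtPoly \cap \R^s_{>0}$. Here one must be attentive to the fact that removing an intermediate changes the species set and the stoichiometric subspace; part of the bookkeeping is checking that the correspondence set up in \cite{Freitas2} between compatibility classes of $G$ and of $G^*$ (and between their steady states) is the one that legitimizes transferring both ``unique positive steady state'' and ``global attractor.''

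The main obstacle I anticipate is not a single calculation but the interface between the two sources' conventions. Our definitions of the SR-graph and R-graph, and of the cone $\mathcal{K}$, differ superficially from those in \cite{Freitas2} (as the footnotes in this section flag), so the proof must make the equivalence explicit enough that Theorems 1--2 and Proposition 3 of \cite{Freitas2} apply verbatim; likewise one must confirm which of those theorems covers a single intermediate removal versus iterated removals, and whether bounded-persistence is an imposed hypothesis on $G$ or instead derived for $G^*$. Once these dictionaries are aligned, the proposition follows directly.
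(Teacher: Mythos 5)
The paper offers no proof of this proposition beyond the sentence preceding it: it is asserted to ``follow directly'' from Theorems 1--2 and Proposition 3 (with Remark 6) of \cite{Freitas2}, the latter packaging the Angeli--De Leenheer--Sontag criterion. Your proposal assembles exactly these ingredients, so at the level of sources you are aligned with the paper. However, the \emph{direction} in which you propose to use the intermediate-removal theorems is reversed relative to what those theorems actually supply, and this is where a genuine gap would open up. Theorems 1--2 of \cite{Freitas2} are equivalences of \emph{graph-theoretic hypotheses}: the R-graph of $G^*$ has the positive loop property iff the R-graph of $G$ does, the SR-graph connectivity conditions correspond, and $\ker(\Gamma^*)\cap\mathrm{int}(\mathcal{K}^*)\neq\emptyset$ iff $\ker(\Gamma)\cap\mathrm{int}(\mathcal{K})\neq\emptyset$. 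The intended argument is therefore: verify conditions (2)--(3) on the small network $G^*$, transfer them to $G$ via these equivalences, and then apply the convergence criterion (Proposition 3 of \cite{Freitas2}) \emph{to $G$ itself}, whose hypotheses are now all in place (bounded-persistence of $G$ being assumed directly as hypothesis (1)). No correspondence between trajectories, compatibility classes, or steady states of $G$ and $G^*$ is ever needed.

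Your proposal instead first proves global convergence for $G^*$ and then tries to ``lift'' that dynamical conclusion back to $G$. That lifting step is not among the cited results and is not routine bookkeeping: whether global convergence survives the addition or removal of intermediates is precisely the open-ended question the paper frames in its introduction, and there is no general theorem of that form to invoke here. You do flag the right worries (the change of species set and stoichiometric subspace, and ``whether bounded-persistence is an imposed hypothesis on $G$ or instead derived for $G^*$''), but resolving them requires recognizing that the transfer goes hypothesis-wise from $G^*$ to $G$, not conclusion-wise from $G^*$ to $G$. With that reorientation your argument closes and matches the paper's intent; without it, the final step has no support in the cited theorems.
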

Appendix \ref{sec:bounded_persistence} shows how bounded-persistence can be established with graph-theoretic criteria from \cite{Angeli2010}. Hence, each condition of Proposition~\ref{prop:summary-stability-results} is a graph-theoretic criterion (for the networks we are interested in).

Also, note that Proposition~\ref{prop:summary-stability-results} yields a global attractor of $\invtPoly \cap \mathbb{R}^s_{>0}$, not all of $\invtPoly$, so Appendix \ref{sec:bounded_persistence} contains a result that we will use to circumvent this.

\subsection{Proof of Global Stability of the All-Encompassing Network} \label{sec:proof-subsec}
\begin{proof}[Proof of Theorem~\ref{thm:main-result}]
Fix rate constants and a stochiometric compatibility class $\invtPoly$.  For any initial condition $x_0 \in \invtPoly$, the $\omega$-limit set $\omega(x_0)$ is a nonempty subset of $\invtPoly$ (because $\invtPoly$ is compact by Lemma~\ref{cor:conservative}) that does not intersect the boundary $\partial \invtPoly$ (by Lemma~\ref{lem:bounded-persistent}).  Thus, it suffices to show that there is a positive steady state in $\invtPoly$ that is a global attractor of $\invtPoly \cap \mathbb{R}^{2m + (n_1 + n_2 + \cdots + n_m)}_{>0}$.  

Accordingly, it is enough to show that the hypotheses of Proposition~\ref{prop:summary-stability-results} hold, where we take $G$ to be the all-encompassing network~\eqref{eq:m_network_processive} and $G^*$ to be the reduced network~\eqref{reduced} from Lemma~\ref{lem:removals}.  
We already know that $G$ is bounded-persistent (Lemma \ref{lem:bounded-persistent}), so we must show that 
(1) $G$ satisfies (G1)--(G3), 
(2) the R-graph of $G^*$ is connected,
(3) the R-graph of $G^*$ has the positive loop property, and 
(4) ${\rm ker }(\Gamma^*) \cap {\rm int } (\mathcal{K}^*) \neq \emptyset$. 

By inspection, $G$ satisfies (G1)--(G2). By Lemma \ref{cor:conservative}, $G$ satisfies (G3).
%


Figure \ref{fig:graphs} displays the SR- and R-graphs of the reduced network $G^*$. 
The R-graph is connected, so property (2) holds.
\begin{figure*}[h!] 
\centering
\begin{subfigure}[t]{0.5\textwidth}
\centering
\begin{tikzpicture}
\node[draw] (R1) at ($ ({2*cos(210)},{2*sin(210)}) $) {$R_1^*$};
\node[draw] (R2) at ($ ({2*cos(90)},{2*sin(90)}) $) {$R_2^*$};
\node[draw] (R3) at ($ ({2*cos(330)},{2*sin(330)}) $) {$R_3^*$};


\node (Pm1) at ($ ({2*cos(270)},{2*sin(270)}) $) {$P_1$};
\node (P1) at ($ ({2*cos(150)},{2*sin(150)}) $) {$P_m$};
\node (Pm) at ($ ({2*cos(30)},{2*sin(30)}) $) {$C_{mn_m}$};
\node (Em) at (0,0) {$E_m$};

\path[->,thick] (R2) edge [bend left,looseness=0.5,out=10,in=170] (Em);
\path[<-,thick] (R2) edge [bend right,looseness=0.5,out=-10,in=190] (Em);

\path[->,thick] (R3) edge [bend left,looseness=0.5,out=10,in=170] (Em);
\path[<-,thick,dashed] (R3) edge [bend right,looseness=0.5,out=-10,in=190] (Em);

\path[->,thick] (R1) edge [bend left,looseness=0.5] (P1);
\path[<-,thick,dashed] (R1) edge [bend right,looseness=0.5] (P1);
\path[->,thick] (R1) edge [bend left,looseness=0.5] (Pm1);
\path[<-,thick] (R1) edge [bend right,looseness=0.5] (Pm1);

\path[->,thick] (R2) edge [bend left,looseness=0.5] (P1);
\path[<-,thick] (R2) edge [bend right,looseness=0.5] (P1);
\path[->,thick] (R2) edge [bend left,looseness=0.5] (Pm);
\path[<-,thick,dashed] (R2) edge [bend right,looseness=0.5] (Pm);

\path[->,thick] (R3) edge [bend left,looseness=0.5] (Pm);
\path[<-,thick] (R3) edge [bend right,looseness=0.5] (Pm);
\path[->,thick] (R3) edge [bend left,looseness=0.5] (Pm1);
\path[<-,thick,dashed] (R3) edge [bend right,looseness=0.5] (Pm1);

\draw[draw=none]  (R1)  -- (P1) node[midway,above=-6] {$-$};
\draw[draw=none]  (R1)  -- (Pm1) node[midway,above=-6] {$+$};
\draw[draw=none]  (R2)  -- (P1) node[midway,above=-6] {$+$};
\draw[draw=none]  (R2)  -- (Pm) node[midway,above=-6] {$-$};
\draw[draw=none]  (R3)  -- (Pm) node[midway,above=-6] {$+$};
\draw[draw=none]  (R3)  -- (Pm1) node[midway,above=-6] {$-$};
\draw[draw=none]  (R2)  -- (Em) node[midway,left] {$+$};
\draw[draw=none]  (R3)  -- (Em) node[midway,below] {$-$};
\end{tikzpicture}

\caption{directed SR-graph}
\end{subfigure}%
    ~
\begin{subfigure}[t]{0.5\textwidth}
\centering
\begin{tikzpicture}
\node[draw,thick] (R1) at (0,0) {$R_1^*$};
\node[draw,thick] (R2) at (1,1.732) {$R_2^*$};
\node[draw,thick] (R3) at (2,0) {$R_3^*$};
\draw[]  (R1)  -- (R2) node[midway,left] {+};
\draw[]  (R2)  -- (R3) node[midway,right] {+};
\draw[]  (R3)  -- (R1) node[midway,below] {+};
\end{tikzpicture}
\caption{R-graph.}
\end{subfigure}
\caption{The SR-graph and R-graph of the reduced network \eqref{reduced}.  A dashed edge in the SR-graph is present if and only if every reaction in the corresponding component in the original all-encompassing network \eqref{eq:m_network_processive} is reversible.}
\label{fig:graphs}
\end{figure*}
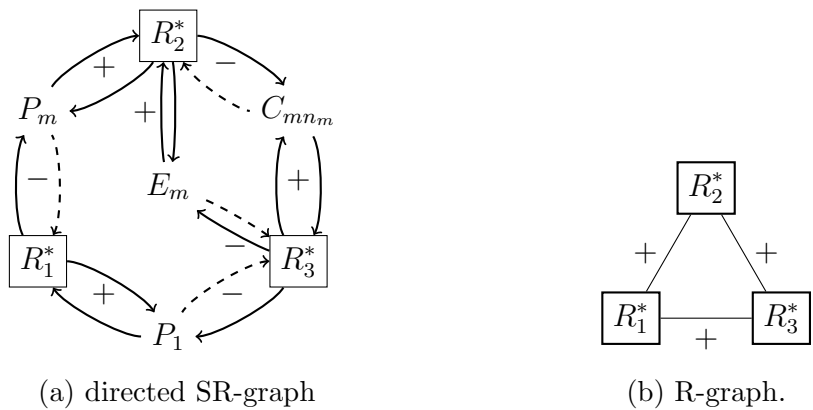
%

In the SR-graph, each length-two path connecting two reaction vertices consists of two edges with opposite signs. 
Thus, the R-graph has only edges with + labels, 
and so vacuously has the positive loop property (property (3)). 

Finally, because all edges of the R-graph 
are labeled by +, it follows that $\mathcal{K}^* = \mathbb{R}^3_{\geq 0}$.  Also, each species in $G^*$ appears in exactly two reactions, once as a reactant and once as a product, and so the sum of each row of $\Gamma^*$ is 0. 
Thus, $(1,1,1) \in {\rm ker }(\Gamma^*) \cap {\rm int } (\mathcal{K}^*)$, so property (4) holds.
%
\end{proof}

\section{Relation to Other Approaches to Proving Global Stability} \label{sec:relation-global}
Our method for proving global stability, via monotone systems theory, is only one of several approaches for proving stability of reaction systems (reviewed in \cite[\S 2.2]{ali2015new}).  Here we note two alternate approaches.

\begin{remark} \label{rem:ali}
For the processive network~\eqref{eq:multisite}, Ali Al-Radwahi gave a Lyapunov function~\cite[\S 8.3]{ali2015new}.  As for the all-encompassing network, Ali Al-Radwahi and Angeli's results again yield a Lyapunov function, which is piecewise linear in the reaction rate functions~\cite{ali-angeli}, thereby obtaining the same global convergence result as ours. Specifically, their Theorem~13 applies to the fully irreversible version of the all-encompassing network, and then their Theorem~14 applies when {\em any} of the reactions are made reversible.
\end{remark}

\begin{remark} \label{rem:Rao2}
As we noted in Remark~\ref{rem:Rao}, our Theorem \ref{thm:main-result} generalizes Rao's recent stability result~\cite{Rao}.
Like Ali Al-Radwahi and Angeli,
Rao built a Lyapunov function that is piecewise linear in the reaction rate functions. 

It appears that Rao's proof can be extended as another means to establish a version of Theorem \ref{thm:main-result} (S.\ Rao, personal communication).  There are, however, two important caveats: Rao's result applies only to mass-action kinetics, and the uniqueness of steady states in each compatibility class must be proven separately.
\end{remark}

\section{Relation to Other Multisite Phosphorylation Systems} \label{sec:relation}
Here we discuss how the phosphorylation networks analyzed in this work compare to others in the literature.

\begin{remark}
There are examples in the literature of processive phosphorylation networks that have more reactions than those in our all-encompassing network \eqref{eq:m_network_processive}. For instance, Gunawardena proposed a processive 2-site phosphorylation network in which 
$ES_0$ reacts to form $E + S_2$, as follows~\cite{Guna}:
\begin{equation*}
\begin{split}
\xymatrix@R-2pc{
  & & E+S_2 \\ 
  E + S_0 \ar @<.4ex> @{-^>} [r] &  \ar @{-^>} [l] ES_0 \ar[ur] \ar[dr]  \\
  & & E+S_1 \\ 
}
\end{split}
\end{equation*}

Unfortunately, we can not extend our proof of Theorem \ref{thm:main-result} to establish stability of such networks because condition (G2) in Section~\ref{sec:assumptions} is violated.
\end{remark}

\begin{remark}
Feliu and Wiuf found that for small phosphorylation systems, including cascades, 
enzyme-sharing causes multistationarity~\cite{Enzyme-sharing}.  Our results give a partial converse: as long as the enzymes $E_i$ are {\em not} shared between rows of the all-encompassing network~\eqref{eq:m_network_processive}, then multistationarity is precluded.
\end{remark}

\begin{remark}

Only recently have there been studies of mixed phosphorylation mechanisms (partially distributive, partially processive)~\cite{Aoki}.
Suwanmajo and Krishnan proved that such a network, in which phosphorylation is distributive and dephosphorylation is processive (or, by symmetry, vice-versa), is {\em not} multistationary~\cite{SK}.  
Thus, it always admits a unique steady state, via a standard application of the Brouwer fixed-point theorem. 
This proves half of a conjecture that Conradi and Shiu posed~\cite{Shiu}.  

Perhaps surprisingly, the other half of the conjecture was essentially\footnote{The conjecture was stated for networks in which the processive mechanism is as in~\eqref{eq:multisite}, whereas oscillations were found in the network in which the processive mechanism is as in~\eqref{eq:irreversible}.  These networks  differ by only one reaction, when $n=2$, so it would be interesting to confirm whether adding this extra reaction, with small rate constant, also yields oscillations.} disproven by Suwanmajo and Krishnan: in contrast with processive systems (\S \ref{sec:pro-lit}), mixed systems need not be globally stable: they can be oscillatory~\cite{SK}!
\end{remark}

\section{Discussion} \label{sec:discussion}
Here we proved that a class of important biological networks -- fully processive phosphorylation/dephosphorylation cycles -- is globally convergent.  We did this by constructing an all-encompassing network that subsumes an infinite family of networks (e.g., reactions may be reversible or irreversible).  

Not only did this construction allow us to prove global convergence for many networks at once, but it also allowed us to incorporate network uncertainty into our analyses.  Indeed, one might not know whether specific reactions in a given biological network are reversible or irreversible, or whether one should incorporate product inhibition.  We therefore hope that our approach to handling network/model uncertainty may be useful in the future. To our knowledge, we are the first to introduce notation (\rev{}{}) to accommodate possibly reversible reactions.

We now return to the question from the start of this work: when are global dynamics preserved after adding or removing reactions and/or intermediate complexes from a network?  For the processive phosphorylation/dephosphorylation cycles in this work, we saw that the dynamics -- namely, global convergence to a unique equilibrium -- are preserved under these operations (where only the backward reaction may be added or removed in the context of \rev{}{}).  Many of these ideas came from~\cite{Freitas2}.

Does changing a reaction from reversible to irreversible always preserve global stability?  No.  For instance, the network $A \leftrightarrows B$ is globally stable (this can be checked by hand or by Proposition \ref{prop:summary-stability-results}), but $A \to B$ is not.

What about the opposite: {\em does changing a reaction from irreversible to reversible preserve global stability?}  We conjecture that this is false in general.

Finally, as noted earlier, monotone systems theory is only one of several approaches to proving global stability in reaction systems.  Which of these will allow us to prove more ``all-encompassing'' results?  In other words, which ones accommodate network uncertainty in the form of possibly reversible reactions and/or removing or adding intermediate complexes?  

\subsection*{Acknowledgments}
{\small 
ME conducted this research as part of the NSF-funded REU in the Department of Mathematics at Texas A\&M University (DMS-1460766), in which AS served as mentor.  The authors thank Muhammad Ali Al-Radhawi, Carsten Conradi, Michael Marcondes de Freitas, Shodhan Rao, and Robert Williams for helpful discussions.  The authors also thank two conscientious referees whose comments improved this work.  AS was supported by the NSF (DMS-1312473/DMS-1513364). 
} 
 
\begin{appendices}
\section{Proving Bounded-Persistence via Siphons} \label{sec:bounded_persistence}

Here we show bounded-persistence using P-semiflows and 
siphons~\cite{Freitas1,PetriNet}.
\begin{definition} Let $G= (\mathcal{S},\mathcal{C},\mathcal{R})$ be a reaction network with $s$ species and stoichiometric matrix $\Gamma$.
\begin{enumerate}
\item 
A \textbf{P-semiflow} (or {\em nonnegative conservation law}) of $G$ is any nonzero vector $v \in \R^{s}_{\geq 0}$ such that $\Gamma^T v = 0$. 
\item A nonempty subset of species $\Sigma \subseteq \mathcal{S}$ is a \textbf{siphon} of $G$ if every reaction of $G$ which has a product species in $\Sigma$ also has a reactant species in $\Sigma$. 
\item $G$ has the \textbf{siphon/P-semiflow property} 
if every siphon contains the support of a P-semiflow.
\end{enumerate}
\end{definition}
\begin{proposition} \label{thm:siphon}
Let $G$ be a reaction network 
that has the siphon/P-semiflow property, and let $\invtPoly$ be a stoichiometric compatibility class.  Then 
for all chemical reaction systems arising from $G$ and for all initial conditions $s_0 \in \invtPoly$, 
the {$\omega$-limit set} of the resulting trajectory does {\em not} intersect the boundary of $\invtPoly$, i.e., 
$\omega(s_0) \cap \partial \invtPoly = \emptyset$. 
Consequently, $G$ is bounded-persistent.
\end{proposition}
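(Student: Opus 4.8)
~The plan is to argue by contradiction. Suppose that for some chemical reaction system arising from $G$ and some $s_0 \in \invtPoly$ the $\omega$-limit set $\omega(s_0)$ meets $\partial\invtPoly$, so there is a point $q \in \omega(s_0)$ with $q_j = 0$ for at least one species $S_j$. From such a boundary $\omega$-limit point I would extract a siphon, and then use the siphon/P-semiflow hypothesis to build a conserved linear functional that is simultaneously zero (evaluated at $q$) and strictly positive (on $\invtPoly$), which is the contradiction. The second assertion is the special case $s_0 \in \R^s_{>0}$, with $\partial\R^s_{\geq 0}$ in place of $\partial\invtPoly$.

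The first and main step is to show that $\Sigma := \{\, S_i \in \mathcal{S} \mid q_i = 0 \,\}$ is a siphon. Recall that the $\omega$-limit set of a bounded orbit is a nonempty compact invariant set (and in the applications of this proposition the network is conservative, cf.\ Lemma~\ref{cor:conservative}, so orbits are bounded), so the solution $\phi(\cdot)$ with $\phi(0) = q$ is defined for all $t \in \R$ and satisfies $\phi(t) \in \omega(s_0) \subseteq \R^s_{\geq 0}$ for every $t$. Suppose $\Sigma$ were not a siphon: then some reaction $R$ has a product species $S_j \in \Sigma$ while no reactant species of $R$ lies in $\Sigma$, so all reactant species of $R$ have strictly positive concentration at $q$, and hence the rate of $R$ is strictly positive at $q$ (immediate for mass-action, and true under the kinetics hypotheses of~\cite{Freitas2}). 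Each reaction that consumes $S_j$ has a rate that vanishes at $q$ since $q_j = 0$, and each reaction producing $S_j$ contributes nonnegatively to $\dot\phi_j$; therefore $\dot\phi_j(0) > 0$. But $\phi_j(t) \geq 0$ for all $t$ with $\phi_j(0) = 0$, so $t = 0$ is a global minimum of $\phi_j$, forcing $\dot\phi_j(0) = 0$ --- a contradiction. Hence $\Sigma$ is a siphon, and it is nonempty because $q \in \partial\invtPoly$.

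Now I would apply the hypothesis: since $\Sigma$ is a nonempty siphon of $G$, the siphon/P-semiflow property gives a P-semiflow $v \in \R^s_{\geq 0}\setminus\{0\}$ with $\mathrm{supp}(v) \subseteq \Sigma$. Because $\Gamma^T v = 0$, the functional $t \mapsto \langle v, x(t)\rangle$ has derivative $v^T\Gamma R(x) = 0$, so it is constant along the trajectory, and the same identity shows $\langle v, \cdot\rangle$ is constant on all of $\invtPoly$. Picking $t_n \to \infty$ with $x(t_n) \to q$ gives $\langle v, x(0)\rangle = \lim_n \langle v, x(t_n)\rangle = \langle v, q\rangle = \sum_i v_i q_i = 0$, since $v_i > 0$ implies $S_i \in \Sigma$, i.e.\ $q_i = 0$. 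On the other hand, whenever $\invtPoly$ meets $\R^s_{>0}$ --- in particular for any class arising from a positive initial condition --- the constant value of $\langle v,\cdot\rangle$ on $\invtPoly$ equals $\langle v, p\rangle > 0$ for a positive $p \in \invtPoly$, as $v$ is nonzero and nonnegative. These contradict each other, so $\omega(s_0)\cap\partial\invtPoly = \emptyset$; taking $s_0 \in \R^s_{>0}$ shows $\omega(s_0) \cap \partial\R^s_{\geq 0} = \emptyset$, i.e.\ $G$ is bounded-persistent.

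I expect the siphon-extraction step to be the crux: it uses invariance of the $\omega$-limit set (hence precompactness of the orbit, which should be stated even though it holds automatically for the conservative networks here), and it uses that the rate of $R$ is positive at $q$ while the rates of reactions consuming $S_j$ vanish at $q$ --- both transparent for mass-action but requiring one to invoke the admissible-kinetics conditions (r1)--(r3) of~\cite{Freitas2} in the general case. After the siphon is in hand, only routine linear algebra with the conserved functional $\langle v,\cdot\rangle$ remains.
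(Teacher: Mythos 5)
Your proposal is correct and follows essentially the same route as the paper: extract the siphon from the zero-coordinate set of a boundary $\omega$-limit point, then contradict the positivity of the conserved quantity $\langle v,\cdot\rangle$ supplied by the siphon/P-semiflow property. The only difference is that the paper simply cites these two facts (\cite[Proposition 5.4]{PetriNet} for the siphon-extraction step and \cite[Lemma 3.4]{ShiuSturmfels} for the P-semiflow contradiction), whereas you prove them directly, correctly flagging the need for precompactness of the orbit so that the $\omega$-limit set is (backward) invariant.
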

\begin{proof}
Let $s$ be the number of species. 
The first part follows from~\cite[Proposition 5.4]{PetriNet}, which states that the set of zero-coordinates of any $\omega$-limit point of a trajectory with initial condition in $\mathbb{R}^s_{\geq 0}$ is a siphon (if nonempty),
and~\cite[Lemma 3.4]{ShiuSturmfels}, which states that the siphon/P-semiflow property (labeled property ($\star$) there) is equivalent to the condition that {\em no} point in any compatibility class $\invtPoly$ has zero-coordinate set equal to a siphon.  
The ``Consequently'' part is immediate (we are considering initial conditions in $\mathbb{R}^s_{\geq 0}$ vs.\ $
\mathbb{R}^s_{> 0}$).
\end{proof}


\begin{lemma} \label{lem:bounded-persistent}
The 
all-encompassing network~\eqref{eq:m_network_processive} is bounded-persistent.  Moreover, for any stoichiometric compatibility class $\invtPoly$, any chemical reaction system arising from the network, and any initial condition $s_0 \in \invtPoly$, 
the resulting {$\omega$-limit set} does {\em not} intersect the boundary of $\invtPoly$, i.e., 
$\omega(s_0) \cap \partial \invtPoly = \emptyset$. 
\end{lemma}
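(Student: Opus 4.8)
The plan is to deduce the lemma from Proposition~\ref{thm:siphon}: it suffices to verify that the all-encompassing network~\eqref{eq:m_network_processive} has the siphon/P-semiflow property, i.e.\ that every siphon contains the support of a P-semiflow. Two families of P-semiflows do the job. For each component $i$, the ``total enzyme'' vector $v^{(i)}$ with $v^{(i)}_{E_i}=1$ and $v^{(i)}_{C_{ij}}=1$ for $j=1,\dots,n_i$ (all other coordinates zero) satisfies $\Gamma^{T}v^{(i)}=0$, since every reaction of component $i$ trades one unit of $E_i$ for one unit of some $C_{ij}$ (or vice versa) while the reactions of other components involve neither $E_i$ nor any $C_{ij}$. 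Similarly, the ``total substrate'' vector $w$ with $w_{P_i}=1$ for all $i$, $w_{C_{ij}}=1$ for all $i,j$, and $w_{E_i}=0$ is a P-semiflow, because every reaction merely shuttles one substrate unit around the cycle $P_1\to P_2\to\cdots\to P_m\to P_1$, holding it transiently in some $C_{ij}$. Both vectors are nonzero and nonnegative, so they are genuine P-semiflows, with supports $\mathrm{supp}(v^{(i)})=\{E_i\}\cup\{C_{i1},\dots,C_{i,n_i}\}$ and $\mathrm{supp}(w)=\{P_1,\dots,P_m\}\cup\{\,C_{ij}:\text{all }i,j\,\}$.

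The main step is to show that every nonempty siphon $\Sigma$ contains $\mathrm{supp}(v^{(i)})$ for some $i$, or else contains $\mathrm{supp}(w)$. I would argue by tracing backwards along the reaction chains, using only the defining property of a siphon---a reaction with a product species in $\Sigma$ has a reactant species in $\Sigma$---applied separately to each direction of a reversible reaction. The argument rests on three propagation rules, each of which uses only the always-present forward reaction and is therefore insensitive to which reactions of~\eqref{eq:m_network_processive} happen to be reversible: (i) if $C_{ij}\in\Sigma$ for some $j$, then tracing back along $C_{i1}\to C_{i2}\to\cdots\to C_{i,n_i}$ forces $C_{i1}\in\Sigma$, and then the reaction $P_i+E_i\to C_{i1}$ forces $P_i\in\Sigma$ or $E_i\in\Sigma$; (ii) if $E_k\in\Sigma$, then the reaction $C_{k,n_k}\to P_{k+1}+E_k$ forces $C_{k,n_k}\in\Sigma$, and rule (i) (applied with $j=n_k$) then forces all intermediates of component $k$ into $\Sigma$, so $\Sigma\supseteq\mathrm{supp}(v^{(k)})$; (iii) if $P_k\in\Sigma$, then the reaction $C_{k-1,n_{k-1}}\to P_k+E_{k-1}$ (indices mod $m$) forces $C_{k-1,n_{k-1}}\in\Sigma$, whence rule (i) forces all intermediates of component $k-1$ into $\Sigma$ together with $P_{k-1}\in\Sigma$ or $E_{k-1}\in\Sigma$.

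With these rules the classification is immediate. Pick any species of $\Sigma$; by rule (i) we may assume it is some $P_k$ or $E_k$. If it is some $E_k$, rule (ii) gives $\Sigma\supseteq\mathrm{supp}(v^{(k)})$ and we are done. If it is some $P_k$, rule (iii) gives either some $E_{k-1}\in\Sigma$ (done, via rule (ii)) or $P_{k-1}\in\Sigma$, in which case we repeat---and each repetition also deposits all intermediates of one more component into $\Sigma$. After at most $m$ steps we either meet some $E_i\in\Sigma$, or we have traversed the entire cycle of components, thereby forcing $P_1,\dots,P_m$ and every $C_{ij}$ into $\Sigma$, i.e.\ $\Sigma\supseteq\mathrm{supp}(w)$. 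In all cases $\Sigma$ contains the support of a P-semiflow, so the network has the siphon/P-semiflow property, and Proposition~\ref{thm:siphon} yields both conclusions of the lemma. I expect the only real work---though not conceptually hard---to lie in checking that the backward-tracing rules go through uniformly over all the reversible/irreversible configurations permitted by~\eqref{eq:m_network_processive} and in handling the wrap-around of the component cycle cleanly; verifying that $v^{(i)}$ and $w$ are P-semiflows and invoking Proposition~\ref{thm:siphon} are routine.
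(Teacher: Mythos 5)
Your proposal is correct and follows essentially the same route as the paper: invoke Proposition~\ref{thm:siphon} and verify that every siphon contains the support of either a per-component total-enzyme P-semiflow or the total-substrate P-semiflow, with your backward-tracing rules simply spelling out the case analysis the paper dismisses as ``straightforward to check.'' (Incidentally, your enzyme semiflow correctly has $v_{E_i}=1$ on its support, whereas the paper's displayed vector reads $v_{P_i}=1$, which appears to be a typo.)
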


\begin{proof}
By Proposition~\ref{thm:siphon}, we need only show that each siphon of network~\eqref{eq:m_network_processive} contains the support of a nonnegative conservation law (P-semiflow).  It is straightforward to check that each siphon contains the species (1) $E_i$, $C_{i1}$, $C_{i2}, \dots, C_{in_i}$ for some $i$, or (2) all $P_i$'s and all $C_{ij}$'s (for all $i,j$).  In the first case, the siphon contains the support of the conservation law for the total amount of free and bound enzyme $E_i$ (namely, $v \in \mathbb{R}^{2m+(n_1+\dots+n_m)}$ defined by $v_{P_i}=v_{C_{i1}}=\cdots = v_{C_{in_i}}=1$ and all others $=0$).  In the second case, the siphon contains the support of the conservation law for the total amount of free and bound substrate ($v_{P_i}=v_{C_{ij}}=1$ for all $i,j$ and all others $=0$).
\end{proof}

\end{appendices}

\section*{References}
\bibliographystyle{model1-num-names}
\bibliography{phos.bib}

\end{document}